\newcommand{\blind}{1}
\newtheorem{theorem}{Theorem}
\newtheorem{condition}{Condition}
\newtheorem{lemma}{Lemma}
\newtheorem{proposition}{Proposition}
\DeclareMathOperator{\Tr}{Tr}
\begin{document}

\def\spacingset#1{\renewcommand{\baselinestretch}%
{#1}\small\normalsize} \spacingset{1}


\if1\blind
{
  \title{\bf Forecasting Multiple Time Series with One-Sided Dynamic Principal Components}
  \author{Daniel Pe\~{n}a\thanks{
\textit{Daniel Pe\~{n}a is Professor, Department of Statistics and Institute of Financial Big Data, Universidad Carlos III de Madrid,
Calle Madrid 126, 28903 Getafe, Espa\~{n}a, (e-mail: daniel.pena@uc3m.es). Ezequiel Smucler is currently Postdoctoral Research Fellow, Department of Statistics, University of British Columbia, 3182 Earth Sciences Building, 2207 Main Mall
Vancouver, BC, Canada V6T 1Z4 (e-mail: esmucler@cs.ubc.ca). Victor J.
Yohai is Professor Emeritus, Mathematics Department, Faculty of Exact Sciences,
Ciudad Universitaria, 1428 Buenos Aires, Argentina (e-mail: victoryohai@gmail.com). D.P. has been supported by Grant ECO2015-66593-P of MINECO/FEDER/UE. E.S. was partially funded by a CONICET Ph.D fellowship and by grant PIP 112-201101-00339 from CONICET.}}\hspace{.2cm}\\
    Department of Statistics and Institute of Financial Big Data\\
    Universidad Carlos III de Madrid, Spain\\
    and \\
    Ezequiel Smucler \\
    Instituto de Calculo\\
    School of Exact and Natural Sciences\\
    Universidad de Buenos Aires - CONICET, Argentina\\
	and \\
    Victor J. Yohai \\
    Instituto de Calculo and Department of Mathematics\\
    School of Exact and Natural Sciences\\
    Universidad de Buenos Aires - CONICET, Argentina
    }
    \date{}
  \maketitle
} \fi

\if0\blind
{
  \bigskip
  \bigskip
  \bigskip
  \begin{center}
    {\LARGE\bf Forecasting Multiple Time Series with One-Sided Dynamic Principal Components}
\end{center}
    \date{}
  \medskip
} \fi

\bigskip
\begin{abstract}
We define one-sided dynamic principal components (ODPC) for time series as
linear combinations of the present and past values of the series that minimize
the reconstruction mean squared error. Previous definitions of dynamic
principal components depend on past and future values of the series. For this
reason, they are not appropriate for forecasting purposes. On the contrary, it
is shown that the ODPC introduced in this paper can be successfully used for
forecasting high-dimensional multiple time series. An alternating least
squares algorithm to compute the proposed ODPC is presented. We prove that for
stationary and ergodic time series the estimated values converge to their
population analogues. We also prove that asymptotically, when both the number
of series and the sample size go to infinity, if the data follows a dynamic
factor model, the reconstruction obtained with ODPC\ converges, in mean
squared error, to the common part of the factor model. Monte Carlo results
shows that forecasts obtained by the ODPC compare favourably with other
forecasting methods based on dynamic factor models.
\end{abstract}

\noindent%
{\it Keywords:} dimensionality
reduction; high-dimensional time series; dynamic factor models
\vfill

\newpage
\spacingset{1.45} 
\section{Introduction}
\label{sec:intro}

Forecasting a large number of cross-correlated time series is a difficult
problem. Building a multivariate VARMA model is only possible when the number
of series is small compared to the sample size. Therefore, other alternatives
have been explored. \cite{BoxTiao} introduced linear combinations of the
series with maximum predictability. \cite{Litterman} proposed Bayesian VAR
models with shrinking prior distributions to control the number of parameters.
\cite{AhnReinsel} addressed this problem by introducing reduced-rank
autoregressive models. \cite{TiaoTsay} presented ways to simplify the
construction of VARMA models by identifying scalar components. However, the
currently most popular procedures for large data sets are based on dynamic
factor models, where the relationship between the series and the factor can be
contemporaneous, or with lags. \cite{SW2002} use the contemporaneous model for
forecasting assuming that all the variables follow the same dynamic factor
model. Then, the forecast of a given variable can be written as the sum of the
forecast of the common component, driven by the factors, plus the univariate
forecasts of the idiosyncratic component. They used principal components of
the explanatory variables to obtain consistent estimators of the factor
effects and fitted univariate autoregressive models to forecast the
idiosyncratic component. Their method showed a good performance in simulated
and real macroeconomic data. This work also explains why univariate forecasts
are improved by using as a regressor a weighted average of all the series.
This procedure, called forecast pooling (see \cite{GarciaFerrer}), can be
justified by assuming a common factor in the series, as shown by
\cite{Pena2004}.

\cite{Forni2000} proposed a general dynamic factor model assuming lagged
relationships between the series and the factors. They allow for an infinite
number of factor lags and low correlation between any two idiosyncratic
components. \cite{Forni2005} proposed a one-sided method of estimation of the
common part of a dynamic factor model for forecasting. The forecasts generated
with this procedure have been compared tothe ones derived by \cite{SW2002} and
the results are mixed (see \citet{Forni2016}). A modified forecasting approach
was proposed by \cite{Forni2015}, although again, as shown in \cite{Forni2015}%
, the results are mixed.

\cite{PenaYohai2016}, following Brillinger's idea of dynamic principal
components, \cite{Brillinger64, Brillinger}, proposed components that provide
an optimal reconstruction of the series in finite samples, but dropping
Brillingers's assumption that the components are linear combinations of the
data. However, this approach is not expected to work well in forecasting
problems as the last values of the dynamic principal components have been
computed with smaller number of observations than the central values.

In Section \ref{sec:Comp} of this paper we define one-sided dynamic principal
components (ODPC) as linear combinations of present and previous values of the
series which have optimal reconstruction performance, that is, they minimize a
mean squared error reconstruction criterion. Following Hotelling's original
spirit of principal components, our definition is not based on assuming any
model for the vector time series. We show how to forecast future values of the
series using the proposed ODPC and a univariate forecasting method. We present
two properties of the proposed estimator. In Section \ref{sec:cons} we prove
that for stationary and ergodic time series the estimated values converge to
their population analogues. We also prove, in Section \ref{sec:cons:dfm}, that
asymptotically, when both the number of series and the sample size goes to
infinity, if the data follows a dynamic factor model, the reconstruction
obtained with ODPC\ converges, in mean squared error, to the common
part of the factor model. In Sections \ref{sec:simu} and \ref{sec:real_data}
we illustrate with Monte Carlo simulations and with a real data example that
our forecasting procedure compares favourably with other forecasting methods
based on dynamic factor models. We discuss possible strategies for
choosing the number of components and lags used to define them in Section \ref{sec:lags}.
Finally, some conclusions and possible
extensions are discussed in Section \ref{sec:conclu}. Section \ref{sec:appen}
is a technical appendix containing the proofs of our main theoretical results.
\section{One Sided Dynamic Components and their computation}

\label{sec:Comp}

Consider the vector time series $\mathbf{z}_{1},\dots,\mathbf{z}_{T},$ where
$\mathbf{z}_{t}=(z_{t,1},\dots,z_{t,m})^{\prime}$. Let $\mathbf{Z}$ be the
data matrix of dimension $T\times m$ where each row is $\mathbf{z}_{t}%
^{\prime}$. Consider integer numbers $k^{1}_{1},k^{1}_{2}\geq0$. Let $\mathbf{a}%
=(\mathbf{a}_{0}^{\prime},\dots,\mathbf{a}_{k^{1}_{1}}^{\prime})^{\prime}$, where
$\mathbf{a}_{h}^{\prime}=(a_{h,1},...,a_{h,m})$, be a vector of dimension
$m(k_{1}+1)\times1$, let $\boldsymbol{\alpha}^{\prime}=(\alpha_{1}%
,\dots,\alpha_{m})$ and $\mathbf{B}$ the matrix that has coefficients
$b_{h,j}$ and dimension $(k^{1}_{2}+1)\times m$. We can define the first one-sided
dynamic principal component with $k_{1}^{1}$ lags as the vector
\begin{equation}
f_{t}=\sum\limits_{j=1}^{m}\sum\limits_{h=0}^{k_{1}^{1}}a_{h,j}z_{t-h,j}\quad
t=k_{1}^{1}+1,\dots,T, \label{eq:component}%
\end{equation}
and use this component to reconstruct the series using $k_{2}^{1}$ lags of the
component as
\begin{equation}
z_{t,j}^{R}(\mathbf{a},\boldsymbol{\alpha},\mathbf{B)}=\alpha_{j}%
+\sum\limits_{h=0}^{k_{2}^{1}}b_{h,j}f_{t-h}.\nonumber
\end{equation}
The values $(k_{1}^{1},k_{2}^{1})$ that define the first dynamic principal
components will be discussed later. Suppose now they are given. Then, the
theoretical optimal values of $\mathbf{a},\boldsymbol{\alpha}$ and
$\mathbf{B}$ can be defined as those that minimize the mean squared error in
the reconstruction of the data, that is, calling $\mathbb{E}$ the expectation
operator, as the solutions of
\begin{equation}
(\mathbf{a}^{\ast},\boldsymbol{\alpha}^{\ast},\mathbf{B}^{\ast}\mathbf{)=}%
\arg\min_{\mathbf{a},\boldsymbol{\alpha}\text{,}\mathbf{B}}\frac{1}%
{T-(k_{1}^{1}+k_{2}^{2})}\sum\limits_{j=1}^{m}\sum\limits_{t=(k_{1}^{1}%
+k_{2}^{2})+1}^{T}\mathbb{E}\left(  z_{t,j}-z_{t,j}^{R}(\mathbf{a}%
,\boldsymbol{\alpha},\mathbf{B)}\right)  ^{2}\nonumber
\end{equation}

Natural estimators of $(\mathbf{a}^{\ast},\boldsymbol{\alpha}^{\ast
},\mathbf{B}^{\ast}\mathbf{)}$ can be defined as solutions of
\begin{equation}
\arg\min_{\mathbf{a},\boldsymbol{\alpha},\mathbf{B}}\text{MSE}(\mathbf{a}%
,\boldsymbol{\alpha},\mathbf{B}) \label{eq:def_estim_partial}%
\end{equation}
where
\begin{equation}
\text{MSE}(\mathbf{a},\boldsymbol{\alpha},\mathbf{B})=\frac{1}{T-(k_{1}%
^{1}+k_{2}^{1})}\sum\limits_{j=1}^{m}\sum\limits_{t=(k_{1}^{1}+k_{2}^{1}%
)+1}^{T}(z_{t,j}-z_{t,j}^{R}(\mathbf{a},\boldsymbol{\alpha},\mathbf{B)})^{2}
\label{eq:MSE}%
\end{equation}
Note that if $(\mathbf{a},\boldsymbol{\alpha},\mathbf{B)}$ is a solution of
\eqref{eq:def_estim_partial} then $(\gamma\mathbf{a},\boldsymbol{\alpha
},\mathbf{B/\gamma)}$ will be one as well. Hence, we define the optimal
$\widehat{\mathbf{a}}$, $\widehat{\boldsymbol{\alpha}}$ and
$\widehat{\mathbf{B}}$ as any solution of
\begin{equation}
\text{MSE}(\widehat{\mathbf{a}},\widehat{\boldsymbol{\alpha}}%
,\widehat{\mathbf{B}})=\min_{\Vert\mathbf{a}\Vert=1,\boldsymbol{\alpha
},\mathbf{B}}\text{MSE}(\mathbf{a},\boldsymbol{\alpha},\mathbf{B}).
\label{eq:def_estim}%
\end{equation}
Conditions to guarantee the existence of at least one solution of
\eqref{eq:def_estim} will be given in Section \ref{sec:cons}.

Let
\begin{equation}
\widehat{f}_{t}=\sum\limits_{j=1}^{m}\sum\limits_{h=0}^{k_{1}^{1}}%
\widehat{a}_{h,j}z_{t-h,j}, \label{eqforf}%
\end{equation}
and $\widehat{z}_{t,j}=z_{t,j}^{R}(\widehat{\mathbf{a}}%
,\widehat{\boldsymbol{\alpha}},\widehat{\mathbf{B}})=\widehat{\alpha}_{j}%
+\sum_{h=0}^{k_{2}^{1}}\widehat{b}_{h,j}\widehat{f}_{t-h}$ be the
corresponding optimal reconstruction of the $j$-th series at period $t$,
$1\leq j\leq m$, $(k_{1}^{1}+k_{2}^{1})+1\leq t\leq T$. We define the second
one-sided dynamic principal component with $(k_{1}^{2},k_{2}^{2})$ lags as the
first one-sided dynamic principal component of the residuals $z_{t,j}%
-\widehat{z}_{t,j}$, $1\leq j\leq m$, $(k_{1}^{1}+k_{2}^{1})+1\leq t\leq T$.
Higher order principal components are defined similarly. Note that if we
compute $q$ one-sided principal components, each with $(k_{1}^{i},k_{2}^{i})$
lags, $1\leq i\leq q$, we will only be able to reconstruct the periods
$\sum_{i=1}^{q}\left(  k_{1}^{i}+k_{2}^{i}\right)  +1,\dots,T$. The
superscript $i$, indicating the principal component in the vector of lags
$(k_{1}^{i},k_{2}^{i})$ will generally be omitted when no confusion could arise.

In order to derive an algorithm to compute estimators of $\mathbf{a}%
,\boldsymbol{\alpha}$ and $\mathbf{B}$, we need first to express the objective
function, $\text{MSE}(\mathbf{a},\boldsymbol{\alpha},\mathbf{B})$, in a more
manageable form. To this end, we will introduce further notation. Throughout
this paper $\Vert\cdot\Vert$ will stand for the Euclidean norm for vectors and
the spectral norm for matrices, whereas $\Vert\cdot\Vert_{F}$ will stand for
the Frobenius norm for matrices. $\mathbf{A}^{\dagger}$ will stand for the
Moore-Penrose pseudo-inverse of a matrix $\mathbf{A}$. For $h\ =0,\dots
,(k_{1}+k_{2})$, let $\mathbf{Z}_{h\ }$ be the $(T-(k_{1}+k_{2}))\times m$
data matrix
\begin{equation}
\mathbf{Z}_{h}=%
\begin{pmatrix}
\mathbf{z}_{h+1}^{\prime}\\
\mathbf{z}_{h+2}^{\prime}\\
\mathbf{z}_{T-(k_{1}+k_{2})+h}^{\prime}%
\end{pmatrix}
\end{equation}
For $l=k_{1},\dots,(k_{1}+k_{2})$, let $\mathbf{Z}_{l,0}=\left[
\mathbf{Z}_{l},\mathbf{Z}_{l-1},...,\mathbf{Z}_{l-k_{1}}\right]  $ be a
$(T-(k_{1}+k_{2}))\times m(k_{1}+1)$ matrix. Then, letting $\mathbf{g}%
_{l}=(f_{l+1},...,f_{T-(k_{1}+k_{2})+l})^{\prime}$ be a vector of dimension
$(T-(k_{1}+k_{2}))\times1$ we have $\mathbf{g}_{l}=\mathbf{Z}_{l,0}\mathbf{a}%
$. Let $\mathbf{D}$ be the matrix of dimension $(k_{2}+2)\times m$ given by
$\mathbf{D}=%
\begin{pmatrix}
\boldsymbol{\alpha}^{\prime}\\
\mathbf{B}%
\end{pmatrix}
$.

The reconstruction of the values of the $\mathbf{Z}_{k_{1}+k_{2}}$ matrix
using $\mathbf{a}$, $\boldsymbol{\alpha}$ and $\mathbf{B}$ can be written as a
matrix $\widehat{\mathbf{Z}}_{k_{1}+k_{2}}$ of the same dimension,
$(T-(k_{1}+k_{2}))\times m$, as
\[
\widehat{\mathbf{Z}}_{k_{1}+k_{2}}=\mathbf{F}_{k_{1},k_{2}}\mathbf{D}%
\]
where $\mathbf{F}_{k_{1},k_{2}}=\left[  \mathbf{1}_{T-(k_{1}+k_{2}%
)},\mathbf{g}_{(k_{1}+k_{2})},\mathbf{g}_{(k_{1}+k_{2})-1},...,\mathbf{g}%
_{k_{1}}\right]  $ is a matrix with dimensions $(T-(k_{1}+k_{2}))\times
(k_{2}+2)$ and $\mathbf{1}_{T-(k_{1}+k_{2})}$ is a vector of length
$T-(k_{1}+k_{2})$ with all its coordinates equal to one. Note that
$\mathbf{F}_{k_{1},k_{2}}=\mathbf{F}_{k_{1},k_{2}}(\mathbf{a})$, even though
this dependence will not in general be made explicit in the notation.

Let $\mathbf{C}$ be the matrix with dimensions $(T-(k_{1}+k_{2}))(k_{2}%
+1)\times m(k_{1}+1)$ given by
\begin{equation}
\mathbf{C}=\left(
\begin{array}
[c]{c}%
\mathbf{Z}_{k_{1}+k_{2},0}\\
\vdots\\
\mathbf{Z}_{k_{1},0}%
\end{array}
\right)  . \label{eq:def:C}%
\end{equation}
Note that
\[
\mathbf{F}_{k_{1},k_{2}}=\left(  \mathbf{1}_{T-(k_{1}+k_{2})},\mathbf{Z}%
_{k_{1}+k_{2},0}\mathbf{a},\dots,\mathbf{Z}_{k_{1},0}\mathbf{a}\right)  .
\]
Hence
\[
vec(\mathbf{F}_{k_{1},k_{2}})^{\prime}=(\mathbf{1}_{T-(k_{1}+k_{2})}^{\prime
},(\mathbf{Z}_{k_{1}+k_{2},0}\mathbf{a})^{\prime},\dots,(\mathbf{Z}_{k_{1}%
,0}\mathbf{a})^{\prime})=%
\begin{pmatrix}
\mathbf{1}_{T-(k_{1}+k_{2})}\\
\mathbf{C}\mathbf{a}%
\end{pmatrix}
^{\prime}.
\]
and
\begin{align*}
vec(\widehat{\mathbf{Z}}_{k_{1}+k_{2}})=vec(\mathbf{F}_{k_{1},k_{2}}%
\mathbf{D})  &  =(\mathbf{D}^{\prime}\otimes\mathbf{I}_{T-(k_{1}+k_{2}%
)})vec(\mathbf{F}_{k_{1},k_{2}})\\
&  =(\mathbf{D}^{\prime}\otimes\mathbf{I}_{T-(k_{1}+k_{2})})%
\begin{pmatrix}
\mathbf{1}_{T-(k_{1}+k_{2})}\\
\mathbf{C}\mathbf{a}%
\end{pmatrix}
.
\end{align*}
Then, $(\widehat{\mathbf{a}},\widehat{\boldsymbol{\alpha}},\widehat{\mathbf{B}%
})$ can be obtained by minimizing
\[
\Vert\mathbf{Z}_{k_{1}+k_{2}}\mathbf{-}\widehat{\mathbf{Z}}_{k_{1}+k_{2}}%
\Vert_{F}^{2}=\Vert\mathbf{Z}_{k_{1}+k_{2}}-\mathbf{F}_{k_{1},k_{2}}%
\mathbf{D}\Vert_{F}^{2}=\Vert vec(\mathbf{Z}_{k_{1}+k_{2}}%
)-vec(\widehat{\mathbf{Z}}_{k_{1}+k_{2}})\Vert^{2},
\]
subject to $\Vert\widehat{\mathbf{a}}\Vert=1$. Note that
\begin{align*}
&  \Vert vec(\mathbf{Z}_{k_{1}+k_{2}})-vec(\widehat{\mathbf{Z}}_{k_{1}+k_{2}%
})\Vert^{2}=\\
&  \left\Vert vec(\mathbf{Z}_{k_{1}+k_{2}})-(\mathbf{D}^{\prime}%
\otimes\mathbf{I}_{T-(k_{1}+k_{2})})%
\begin{pmatrix}
\mathbf{1}_{T-(k_{1}+k_{2})}\\
\mathbf{C}\mathbf{a}%
\end{pmatrix}
\right\Vert ^{2}=\\
&  \left\Vert vec(\mathbf{Z}_{k_{1}+k_{2}})-(\boldsymbol{\alpha}%
\otimes\mathbf{I}_{T-(k_{1}+k_{2})})\mathbf{1}_{T-(k_{1}+k_{2})}%
-(\mathbf{B}^{\prime}\otimes\mathbf{I}_{T-(k_{1}+k_{2})})\mathbf{C}%
\mathbf{a}\right\Vert ^{2}%
\end{align*}
For a fixed $\mathbf{D}$, $\widehat{\mathbf{a}}$ can be computed by least
squares
\begin{equation}
\widehat{\mathbf{a}}=\left(  (\mathbf{B}^{\prime}\otimes\mathbf{I}%
_{T-(k_{1}+k_{2})})\mathbf{C}\right)  ^{\dagger}(vec(\mathbf{Z}_{k_{1}+k_{2}%
})-(\boldsymbol{\alpha}\otimes\mathbf{I}_{T-(k_{1}+k_{2})})\mathbf{1}%
_{T-(k_{1}+k_{2})}). \label{eqfora}%
\end{equation}
and then standardized to unit norm. On the other hand, for a fixed
$\mathbf{F}_{k_{1},k_{2}}$, the optimal $\mathbf{D}$ can also be computed by
least squares
\begin{equation}
\widehat{\mathbf{D}}=(\mathbf{F}_{k_{1},k_{2}})^{\dagger}\mathbf{Z}%
_{k_{1}+k_{2}}. \label{eqforD}%
\end{equation}
Then, $\widehat{\boldsymbol{\alpha}}$ is given by the first row of
$\widehat{\mathbf{D}}$ and $\widehat{\mathbf{B}}$ is given by the last
$k_{2}+1$ rows of $\widehat{\mathbf{D}}$.

\subsection{Computing algorithm}

We propose the following alternating Least Squares algorithm for computing
$\widehat{\mathbf{a}}$ , $\widehat{\mathbf{D}}$. Let $\mathbf{a}^{(i)}$,
$\mathbf{D}^{(i)}$ and $\mathbf{f}^{(i)}$ be the values of $\mathbf{a}$,
$\mathbf{D}$ and $\mathbf{f}$ corresponding to the $i$-th iteration. Let
$\delta\in(0,1)$, a tolerance parameter to stop the iterations. Write
$\text{MSE}(\mathbf{a},\mathbf{D})=\text{MSE}(\mathbf{a},\boldsymbol{\alpha
},\mathbf{B})$. In order to define the algorithm it is enough to give an
initial value of the component, $\mathbf{f=(}f_{k_{1}+1,}...,f_{T})^{^{\prime
}}$, say $\mathbf{f}^{(0)}$, and describe a rule to compute $\mathbf{D}%
^{(i+1)},\mathbf{a}^{(i+1)}$ and $\mathbf{f}^{(i+1)}$ from $\mathbf{f}^{(i)}$.
This can be done as follows:

\bigskip

1. Given $\mathbf{f}^{(i)}$ define $\mathbf{D}^{(i+1)}$ by (\ref{eqforD}),
where $\mathbf{F}_{k_{1},k_{2}}$ corresponds to $\mathbf{f}^{(i)}.$

2. Compute $\mathbf{a}_{\ast}^{(i+1)}$ by \eqref{eqfora} with $\mathbf{D}%
=\mathbf{D}^{(i+1)}$ and let $\mathbf{a}^{(i+1)}=\mathbf{a}_{\ast}%
^{(i+1)}/\left\Vert \mathbf{a}_{\ast}^{(i+1)}\right\Vert .$

3. The $t$-th coordinate of $\mathbf{f}^{(i+1)}$ is given by
\eqref{eq:component} with $\mathbf{a}=\mathbf{a}^{(i+1)}$.

\bigskip

The stopping rule is as follows: Stop when
\[
\frac{\text{MSE}(\mathbf{a}^{(i)},\mathbf{D}^{(i)}) - \text{MSE}%
(\mathbf{a}^{(i+1)},\mathbf{D}^{(i+1)})}{\text{MSE}(\mathbf{a}^{(i)}%
,\mathbf{D}^{(i)})}\leq\delta
\]

Clearly in this algorithm at each step the MSE\ decreases and therefore it
converges to a local minimum. To obtain a global minimum the initial value
$\mathbf{f}^{(0)}$ should be close enough to the optimal one. We propose to
take $\mathbf{f}^{(0)}$ as the last $T-k_{1}$ coordinates of the first
ordinary principal component of the data. Alternatively, the first Generalized
Dynamic Principal Component proposed by \cite{PenaYohai2016} could be used.

Note that since the matrix $(\mathbf{B}^{\prime}\otimes\mathbf{I}%
_{T-(k_{1}+k_{2})})\mathbf{C}$ has dimensions $m(T-(k_{1}+k_{2}))\times
m(k_{2}+1)$, solving the associated least squares problem can be time
consuming for high-dimensional (large $m$) problems. The iterative nature of
the algorithm we propose implies that this least squares problem will have to
be solved several times for different $\mathbf{B}$ matrices. However, note
that since the matrix $\mathbf{B}^{\prime}\otimes\mathbf{I}_{T-(k_{1}+k_{2})}$
is sparse, it can be stored efficiently, and multiplying it with a vector is
relatively fast. We found that for problems with a moderately large $m$, the
following modification of our algorithm works generally faster: instead of
finding the optimal $\mathbf{a}^{(i+1)}$ corresponding to $\mathbf{D}^{(i+1)}%
$, just do one iteration of coordinate descent for $\mathbf{a}^{(i+1)}.$

\subsection{Forecasting}

\label{subsec:forecasting}

Suppose we have fitted $q$ dynamic principal components to the data, each with
$(k_{1}^{i},k_{2}^{i})$ lags, $i=1,\dots,q$. Let $\widehat{\mathbf{f}}_{T}%
^{i}$ be the vector with the estimated values for the $i$-th dynamic principal
component and $\widehat{\mathbf{B}}^{i}$, $\widehat{\boldsymbol{\alpha}}^{i}$
be the corresponding loadings and intercepts. Suppose we have decided upon a
procedure to forecast each of these dynamic principal components separately,
and let $\widehat{f}_{T+h|T}^{i}$ for $h>0$ be the forecast of $f_{T+h}^{i}$
with information until time $T.$ We can obtain an $h$-steps ahead forecast of
$\mathbf{z}_{T}$ as
\[
\widehat{z}_{T+h|T,j}=\sum\limits_{i=1}^{q}\left(  \widehat{\alpha}_{j}%
^{i}+\sum\limits_{v=0}^{k_{2}^{i}}\widehat{b}_{v,j}^{i}\widehat{f}%
_{T+h-v|T}^{i}\right)  \quad j=1,\dots,m.
\]


\section{Wald type consistency for stationary data}

\label{sec:cons}

In this section we prove a consistency result for our procedure in the case of
stationary and ergodic vector time series. First, in Proposition
\ref{propo:exist_minsamp} we show that, asymptotically and with probability
one, problem \eqref{eq:def_estim} is well defined. In Proposition
\ref{propo:exist_minpop} we show that the population version of
\eqref{eq:def_estim}, obtained replacing means by expectations is well
defined. Finally, in Theorem \ref{theo:consistency}, we prove that the
distance between any given solution of \eqref{eq:def_estim} and the set of
solutions of the population problem converges almost surely to zero.

We will assume that the process $\mathbf{z}_{t}$, $t\geq1$, defined in a
probability space $(\Omega,\mathcal{F},\mathbb{P})$, is strictly stationary
and ergodic. We note that if instead one assumes weak second order
stationarity and ergodicity, similar results can be obtained, but with
convergences in probability instead of almost surely. Let $\mathbf{z}$ denote
a random variable with the common distribution of the $\mathbf{z}%
_{t}^{\text{'}}$s. Given a square matrix $\mathbf{M}$, let $\lambda
_{min}(\mathbf{M})$ and $\lambda_{max}(\mathbf{M})$ denote the smallest and
largest (in absolute value) eigenvalues of $\mathbf{M}$ respectively.
Moreover, $\Tr(\mathbf{M})$ will denote the trace of $\mathbf{M}$. We will
assume that $\mathbb{E}\mathbf{z}=0$ and that $\mathbb{E}\Vert\mathbf{z}%
\Vert^{2}<\infty$. Let $\boldsymbol{\Sigma}(l)=\mathbb{E}\mathbf{z}%
_{t}\mathbf{z}_{t-l}^{\prime}$ be the lag $l$ autocovariance matrix of the
process $\mathbf{z}_{t}$.

Note that \eqref{eq:MSE} can be written as
\[
\text{MSE}(\mathbf{a},\boldsymbol{\alpha},\mathbf{B})=\frac{1}{T-(k_{1}%
+k_{2})}\sum\limits_{t=(k_{1}+k_{2})+1}^{T}\Vert\mathbf{z}_{t}%
-\widehat{\mathbf{z}}_{t}\Vert^{2},
\]
where $\widehat{\mathbf{z}}_{t}$ is the vector of length $m$ with coordinates
\[
\widehat{z}_{t,j}=\widehat{\alpha}_{j}+\sum\limits_{h=0}^{k_{2}}%
\widehat{b}_{h,j}\widehat{f}_{t-h},\text{ }j=1,\dots,m.
\]
Let $\mathbf{x}_{t}^{\prime}=(\mathbf{z}_{t}^{\prime},\dots,\mathbf{z}%
_{t-k_{1}}^{\prime})$, so that given any $\mathbf{a}$ with $\Vert
\mathbf{a}\Vert=1$, the corresponding component at time $t$, $f_{t}$, is given
by $\mathbf{a}^{\prime}\mathbf{x}_{t}$. The lag $l$ autocovariance matrix of
$\mathbf{x}_{t}$ is given by
\[
\mathbf{V}(l)=%
\begin{pmatrix}
\boldsymbol{\Sigma}(l) & \boldsymbol{\Sigma}(l+1) & \dots & \boldsymbol{\Sigma
}(l+k_{1})\\
\vdots & \vdots &  & \vdots\\
\boldsymbol{\Sigma}(l-k_{1}) & \boldsymbol{\Sigma}(l-k_{1}+1) & \dots &
\boldsymbol{\Sigma}(l)
\end{pmatrix}
\]
Fix $\mathbf{a}$ with $\Vert\mathbf{a}\Vert=1$, then the covariance matrix of
the vector
\[
(1,\mathbf{a}^{\prime}\mathbf{x}_{t},\mathbf{a}^{\prime}\mathbf{x}_{t-1}%
,\dots,\mathbf{a}^{\prime}\mathbf{x}_{t-k_{2}})
\]
is
\[
\mathcal{S}(\mathbf{a})=%
\begin{pmatrix}
1 & 0 & 0 & \dots & 0\\
0 & \mathbf{a}^{\prime}\mathbf{V}(0)\mathbf{a} & \mathbf{a}^{\prime}%
\mathbf{V}(1)\mathbf{a} & \dots & \mathbf{a}^{\prime}\mathbf{V}(k_{2}%
)\mathbf{a}\\
\vdots & \vdots & \vdots &  & \vdots\\
0 & \mathbf{a}^{\prime}\mathbf{V}(-k_{2})\mathbf{a} & \mathbf{a}^{\prime
}\mathbf{V}(-k_{2}+1)\mathbf{a} & \dots & \mathbf{a}^{\prime}\mathbf{V}%
(0)\mathbf{a}%
\end{pmatrix}
.
\]
We will need the following assumption.

\begin{condition}
\label{condition:eig} There exists $\eta<1$ such that
\[
\mathbb{P}\left(  \sum\limits_{h=0}^{k_{2}}v_{h}(\mathbf{a}^{\prime}%
\mathbf{x}_{t-h})=v_{k_{2}+1}\right)  \leq\eta\
\]
for all $\mathbf{a}\in\mathbb{R}^{m(k_{1}+1)}$ with $\Vert\mathbf{a}\Vert=1$
and $\mathbf{v}=(v_{0},\dots,v_{k_{2}},v_{k_{2}+1})$ such that $\mathbf{v}%
\neq0$. That is, for any $\mathbf{a}$, there is no deterministic linear
relation between the values $f_{t},\dots,f_{t-k_{2}}$.
\end{condition}

It follows from Condition 1 that $\inf_{\Vert\mathbf{a}\Vert=1} \lambda
_{min}(\mathcal{S}(\mathbf{a}))>0$. Note that $\sup_{\Vert\mathbf{a}\Vert=1}
\lambda_{max}(\mathcal{S}(\mathbf{a}))<\infty$ always holds.

Proposition \ref{propo:exist_minsamp} shows that, asymptotically and with
probability one, there exists at least one solution of \eqref{eq:def_estim}.
Write $\text{MSE}(\mathbf{a},\mathbf{D})$ for $\text{MSE}(\mathbf{a}%
,\boldsymbol{\alpha},\mathbf{B})$, where $\mathbf{D}=%
\begin{pmatrix}
\boldsymbol{\alpha}^{\prime}\\
\mathbf{B}%
\end{pmatrix}
.$

\begin{proposition}
\label{propo:exist_minsamp} Assume Condition \ref{condition:eig} holds. Then,
with probability one, there exists $T_{0}$ such that for all $T>T_{0}$,
$\arg\min_{\Vert\mathbf{a} \Vert=1, \mathbf{D}}MSE(\mathbf{a},\mathbf{D})$ has
at least one solution.
\end{proposition}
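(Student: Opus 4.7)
The plan is to reduce the problem to minimizing a continuous function over a compact set. For each $\mathbf{a}$ on the unit sphere, the map $\mathbf{D}\mapsto \text{MSE}(\mathbf{a},\mathbf{D})$ is a convex quadratic whose unconstrained minimizer, by \eqref{eqforD}, is attained at $\widehat{\mathbf{D}}(\mathbf{a})=\mathbf{F}_{k_1,k_2}(\mathbf{a})^{\dagger}\mathbf{Z}_{k_1+k_2}$. Once I know that $\mathbf{F}_{k_1,k_2}(\mathbf{a})$ has full column rank uniformly in $\mathbf{a}$ on the sphere, this pseudo-inverse coincides with $(\mathbf{F}_{k_1,k_2}(\mathbf{a})'\mathbf{F}_{k_1,k_2}(\mathbf{a}))^{-1}\mathbf{F}_{k_1,k_2}(\mathbf{a})'$ and depends continuously on $\mathbf{a}$. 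Then the profile objective $\phi(\mathbf{a}):=\text{MSE}(\mathbf{a},\widehat{\mathbf{D}}(\mathbf{a}))$ is continuous on the compact unit sphere, attains its minimum at some $\widehat{\mathbf{a}}$, and the pair $(\widehat{\mathbf{a}},\widehat{\mathbf{D}}(\widehat{\mathbf{a}}))$ solves \eqref{eq:def_estim}.

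The crux is therefore to establish, with probability one and for $T$ sufficiently large, a uniform lower bound on $\lambda_{\min}\bigl(\widehat{\mathcal{S}}(\mathbf{a})\bigr)$ over the unit sphere, where $\widehat{\mathcal{S}}(\mathbf{a})=(T-(k_1+k_2))^{-1}\mathbf{F}_{k_1,k_2}(\mathbf{a})'\mathbf{F}_{k_1,k_2}(\mathbf{a})$ is the sample analogue of $\mathcal{S}(\mathbf{a})$. The entries of $\widehat{\mathcal{S}}(\mathbf{a})$ are either constants or empirical cross moments between $\mathbf{a}'\mathbf{x}_t$ and $\mathbf{a}'\mathbf{x}_{t-l}$, which are bilinear in $\mathbf{a}$ with coefficients given by sample autocovariance matrices $\widehat{\boldsymbol{\Sigma}}(l)$ for finitely many lags $|l|\leq k_1+k_2$. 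By ergodicity, $\widehat{\boldsymbol{\Sigma}}(l)\cas\boldsymbol{\Sigma}(l)$ for each such $l$, so equicontinuity of these bilinear forms in $\mathbf{a}$ over the compact sphere yields the uniform convergence $\sup_{\|\mathbf{a}\|=1}\|\widehat{\mathcal{S}}(\mathbf{a})-\mathcal{S}(\mathbf{a})\|\to 0$ almost surely.

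The paper already notes that Condition \ref{condition:eig} implies $c:=\inf_{\|\mathbf{a}\|=1}\lambda_{\min}(\mathcal{S}(\mathbf{a}))>0$, which follows because Condition \ref{condition:eig} rules out any deterministic affine relation among $f_t,\dots,f_{t-k_2}$, forcing $\mathcal{S}(\mathbf{a})$ to be positive definite, and then the smallest eigenvalue is a continuous positive function on the compact unit sphere. Combining with the uniform convergence above, with probability one there exists $T_0$ such that $\lambda_{\min}(\widehat{\mathcal{S}}(\mathbf{a}))\geq c/2$ for all $T>T_0$ and all $\|\mathbf{a}\|=1$, which provides the required full-column-rank property and closes the reduction to a compact optimization. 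The main obstacle is precisely the uniformity in $\mathbf{a}$; however, since only finitely many sample autocovariance matrices enter $\widehat{\mathcal{S}}(\mathbf{a})$ and the dependence on $\mathbf{a}$ is polynomial with coefficients given by those matrices, the uniformity follows directly from pointwise ergodic convergence plus compactness, without needing an independent uniform ergodic theorem.
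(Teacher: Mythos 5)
Your proof is correct, and it reaches the conclusion by a genuinely different route from the paper's. You profile out $\mathbf{D}$: for each $\mathbf{a}$ the inner least-squares problem has the closed-form minimizer $\widehat{\mathbf{D}}(\mathbf{a})=\mathbf{F}_{k_1,k_2}(\mathbf{a})^{\dagger}\mathbf{Z}_{k_1+k_2}$, and once $\inf_{\Vert\mathbf{a}\Vert=1}\lambda_{\min}\bigl(\mathbf{F}_{k_1,k_2}^{\prime}\mathbf{F}_{k_1,k_2}/(T-(k_1+k_2))\bigr)$ is bounded away from zero the profile objective is continuous on the compact unit sphere, so a global minimizer exists. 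The paper instead keeps the joint variable $(\mathbf{a},\mathbf{D})$ and proves a coercivity statement (Lemma \ref{lemma:liminf_MSE}): with probability one, $\liminf_{T}\inf_{\Vert\mathbf{a}\Vert=1,\Vert\mathbf{D}\Vert_F\geq M}\text{MSE}(\mathbf{a},\mathbf{D})\geq g(M)$, so for large $T$ the minimization can be restricted to the compact set $\{\Vert\mathbf{a}\Vert=1,\Vert\mathbf{D}\Vert_F\leq M\}$, on which the continuous MSE attains its minimum. Both arguments hinge on exactly the same probabilistic ingredient, namely the uniform almost sure convergence of $\mathbf{F}_{k_1,k_2}^{\prime}\mathbf{F}_{k_1,k_2}/(T-(k_1+k_2))$ to $\mathcal{S}(\mathbf{a})$ and the resulting uniform lower bound on its smallest eigenvalue (the paper's Lemmas \ref{lemma:conv_unif_cov} and \ref{lemma:conv_min_eig}), which you rederive correctly from the finitely many sample autocovariances entering the bilinear forms and the bound $\sup_{\Vert\mathbf{a}\Vert=1}\vert\mathbf{a}^{\prime}(\widehat{\mathbf{V}}(l)-\mathbf{V}(l))\mathbf{a}\vert\leq\Vert\widehat{\mathbf{V}}(l)-\mathbf{V}(l)\Vert$. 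Your route is more economical for the existence statement alone, and you correctly identify the one place where care is needed: the pseudo-inverse (and the profiled minimum value) is discontinuous where the rank of $\mathbf{F}_{k_1,k_2}(\mathbf{a})$ drops, so the uniform full-column-rank property is genuinely required and not a formality. What the paper's heavier coercivity lemma buys in exchange is reuse: the same bound is invoked again in Lemma \ref{lemma:B_bounded} and Theorem \ref{theo:consistency} to confine $\widehat{\mathbf{D}}$ to a fixed compact set, which your profiling argument would not supply for free.
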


Let
\begin{equation}
\text{MSE}_{0}(\mathbf{a},\boldsymbol{\alpha},\mathbf{B})=\mathbb{E}%
\Vert\mathbf{z}_{t}-\mathbf{z}_{t}^{R}(\mathbf{a},\boldsymbol{\alpha
},\mathbf{B})\Vert^{2}\nonumber
\end{equation}
be the population version of \eqref{eq:MSE}. It is easy to verify that
$\text{MSE}_{0}$ is continuous. Let
\[
\mathcal{I}=\left\{  (\mathbf{a}^{\ast},\mathbf{D}^{\ast}):\text{MSE}%
_{0}(\mathbf{a}^{\ast},\mathbf{D}^{\ast})=\inf_{\Vert\mathbf{a}\Vert
=1,\mathbf{D}}\text{MSE}_{0}(\mathbf{a},\mathbf{D})\right\}  .
\]
Proposition \ref{propo:exist_minpop} entails that $\mathcal{I}$ is non-empty.

\begin{proposition}
\label{propo:exist_minpop} Assume Condition \ref{condition:eig} holds. Then
$\inf_{\Vert\mathbf{a} \Vert= 1} \text{MSE}_{0}(\mathbf{a}, \mathbf{D})
\rightarrow+\infty$ when $\Vert\mathbf{D}\Vert_{F} \rightarrow+\infty$.
\end{proposition}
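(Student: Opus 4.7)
The plan is to rewrite $\text{MSE}_0$ as a quadratic form in $\mathbf{D}$ whose leading (Hessian) term is uniformly positive definite in $\mathbf{a}$, so that the quadratic growth in $\|\mathbf{D}\|_F$ dominates the linear cross term.

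First I would fix $\mathbf{a}$ with $\|\mathbf{a}\|=1$ and, for each $j$, put $\mathbf{d}_j = (\alpha_j, b_{0,j}, \ldots, b_{k_2,j})'$ and $\mathbf{u}_t(\mathbf{a}) = (1,\mathbf{a}'\mathbf{x}_t,\mathbf{a}'\mathbf{x}_{t-1},\dots,\mathbf{a}'\mathbf{x}_{t-k_2})'$, so that $z_{t,j}^R = \mathbf{u}_t(\mathbf{a})'\mathbf{d}_j$. Expanding the square and using stationarity gives
\[
\text{MSE}_0(\mathbf{a},\mathbf{D}) \;=\; \sum_{j=1}^m \mathbb{E} z_{t,j}^2 \;-\; 2\sum_{j=1}^m \mathbf{d}_j'\,\mathbf{c}_j(\mathbf{a}) \;+\; \sum_{j=1}^m \mathbf{d}_j'\,\mathcal{S}(\mathbf{a})\,\mathbf{d}_j,
\]
where $\mathbf{c}_j(\mathbf{a}) = \mathbb{E}[\mathbf{u}_t(\mathbf{a})\, z_{t,j}]$ and $\mathcal{S}(\mathbf{a}) = \mathbb{E}[\mathbf{u}_t(\mathbf{a})\mathbf{u}_t(\mathbf{a})']$ is the matrix introduced before Condition \ref{condition:eig}.

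Next I would invoke the consequence of Condition \ref{condition:eig} stated in the excerpt, namely
\[
\lambda_0 \;:=\; \inf_{\|\mathbf{a}\|=1}\lambda_{\min}\!\big(\mathcal{S}(\mathbf{a})\big) \;>\; 0,
\]
which immediately yields $\sum_j \mathbf{d}_j'\mathcal{S}(\mathbf{a})\mathbf{d}_j \geq \lambda_0 \sum_j \|\mathbf{d}_j\|^2 = \lambda_0\|\mathbf{D}\|_F^2$. For the linear term, Cauchy--Schwarz combined with $\mathbb{E}(\mathbf{a}'\mathbf{x}_{t-h})^2 \leq \lambda_{\max}(\mathbf{V}(0))$ for $\|\mathbf{a}\|=1$ gives an upper bound $\|\mathbf{c}_j(\mathbf{a})\| \leq M$ independent of $\mathbf{a}$ and $j$, with $M$ depending only on the second moments of $\mathbf{z}$. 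Hence
\[
\Big|\sum_{j=1}^m \mathbf{d}_j'\mathbf{c}_j(\mathbf{a})\Big| \;\leq\; M\sum_{j=1}^m \|\mathbf{d}_j\| \;\leq\; M\sqrt{m}\,\|\mathbf{D}\|_F.
\]

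Putting these together produces the uniform lower bound
\[
\inf_{\|\mathbf{a}\|=1}\text{MSE}_0(\mathbf{a},\mathbf{D}) \;\geq\; \sum_{j=1}^m \mathbb{E} z_{t,j}^2 \;-\; 2M\sqrt{m}\,\|\mathbf{D}\|_F \;+\; \lambda_0\,\|\mathbf{D}\|_F^2,
\]
whose right-hand side tends to $+\infty$ as $\|\mathbf{D}\|_F\to\infty$, proving the proposition. The only nontrivial ingredient is the uniform positivity $\lambda_0>0$, but that is precisely the content of Condition \ref{condition:eig} once one notes that the continuous map $\mathbf{a}\mapsto \lambda_{\min}(\mathcal{S}(\mathbf{a}))$ attains its minimum on the compact unit sphere. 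Everything else is a straightforward quadratic-vs-linear comparison, so I do not expect any real obstacle.
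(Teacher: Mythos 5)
Your proof is correct and follows essentially the same route as the paper's: expand $\text{MSE}_0$ into a quadratic form in $\mathbf{D}$, bound the quadratic term below by $\|\mathbf{D}\|_F^2\inf_{\|\mathbf{a}\|=1}\lambda_{\min}(\mathcal{S}(\mathbf{a}))$ (the paper writes this as $\Tr(\mathbf{D}\mathbf{D}'\mathcal{S}(\mathbf{a}))$, which equals your column-wise sum), bound the cross term by a constant times $\|\mathbf{D}\|_F$ uniformly in $\mathbf{a}$ via Cauchy--Schwarz, and let the quadratic dominate. The only presentational difference is that the paper bounds the cross term as $2(\mathbb{E}\|\mathbf{z}_t\|^2)^{1/2}\|\mathbf{D}\|_F(\sup_{\|\mathbf{a}\|=1}\lambda_{\max}(\mathcal{S}(\mathbf{a})))^{1/2}$ rather than through the vectors $\mathbf{c}_j(\mathbf{a})$, which is immaterial.
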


Let $d((\mathbf{a},\mathbf{D}),\mathcal{I})=\inf\left\{  \Vert\mathbf{a}%
-\mathbf{a}^{\ast}\Vert+\Vert\mathbf{D}-\mathbf{D}^{\ast}\Vert_{F}%
:(\mathbf{a}^{\ast},\mathbf{D}^{\ast})\in\mathcal{I}\right\}  .$

\begin{theorem}
\label{theo:consistency}[Consistency] Assume Condition \ref{condition:eig}
holds. Let $(\widehat{\mathbf{a}},\widehat{\boldsymbol{\alpha}}%
,\widehat{\mathbf{B}})$ be a solution of \eqref{eq:def_estim} and
$\widehat{\mathbf{D}}^{\prime}=%
\begin{pmatrix}
\widehat{\boldsymbol{\alpha}} & \widehat{\mathbf{B}}^{\prime}%
\end{pmatrix}
$. Then $d((\widehat{\mathbf{a}},\widehat{\mathbf{D}}),\mathcal{I}%
)\overset{\text{a.s}.}{\rightarrow}0$.
\end{theorem}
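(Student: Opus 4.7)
The plan is to follow the classical Wald consistency argument. There are three pieces to assemble: (i) with probability one the sample minimizer $(\widehat{\mathbf{a}},\widehat{\mathbf{D}})$ eventually lies in a fixed compact set; (ii) on every such compact set, $\mathrm{MSE}(\mathbf{a},\mathbf{D})$ converges almost surely and uniformly to $\mathrm{MSE}_0(\mathbf{a},\mathbf{D})$; (iii) a compactness argument, using that $\mathrm{MSE}_0$ is continuous and $\mathcal{I}$ is its argmin, forces any subsequential limit of $(\widehat{\mathbf{a}},\widehat{\mathbf{D}})$ into $\mathcal{I}$.

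For the boundedness in (i), the constraint $\|\widehat{\mathbf{a}}\|=1$ already handles $\widehat{\mathbf{a}}$, so only $\widehat{\mathbf{D}}$ needs to be controlled. For each fixed $\mathbf{a}$, $\mathrm{MSE}(\mathbf{a},\mathbf{D})$ is a strictly convex quadratic in $\mathbf{D}$ whose Hessian is proportional to $\mathbf{F}_{k_1,k_2}^\prime\mathbf{F}_{k_1,k_2}/(T-(k_1+k_2))$. By the ergodic theorem this sample Gram matrix converges a.s.\ to $\mathcal{S}(\mathbf{a})$; because the entries are quadratic in $\mathbf{a}$ and the unit sphere is compact, the convergence is uniform in $\mathbf{a}$. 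Combined with the already–noted fact $\inf_{\|\mathbf{a}\|=1}\lambda_{\min}(\mathcal{S}(\mathbf{a}))>0$ (a consequence of Condition \ref{condition:eig}), this gives a positive lower bound on $\lambda_{\min}\!\bigl(\mathbf{F}_{k_1,k_2}^\prime\mathbf{F}_{k_1,k_2}/(T-(k_1+k_2))\bigr)$ that is uniform in $\mathbf{a}$ with $\|\mathbf{a}\|=1$ for $T$ large. Pairing this with the trivial upper bound $\mathrm{MSE}(\widehat{\mathbf{a}},\widehat{\mathbf{D}})\leq \mathrm{MSE}(\mathbf{a}_0,\mathbf{0})$ for an arbitrary unit $\mathbf{a}_0$, and noting that $\mathrm{MSE}(\mathbf{a}_0,\mathbf{0})=\|\mathbf{Z}_{k_1+k_2}\|_F^2/(T-(k_1+k_2))\to\mathbb{E}\|\mathbf{z}\|^2$ a.s.\ by ergodicity, produces a constant $M$ with $\|\widehat{\mathbf{D}}\|_F\leq M$ eventually, almost surely. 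This is essentially the same coercivity argument that underlies Proposition \ref{propo:exist_minsamp} and the population version Proposition \ref{propo:exist_minpop}.

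For the uniform convergence in (ii), expanding $z_{t,j}^R(\mathbf{a},\boldsymbol{\alpha},\mathbf{B})=\alpha_j+\sum_h b_{h,j}\mathbf{a}^\prime\mathbf{x}_{t-h}$ shows that $\mathrm{MSE}(\mathbf{a},\mathbf{D})$ is a polynomial in the entries of $(\mathbf{a},\mathbf{D})$ whose coefficients are ergodic averages of products $z_{t,j}z_{s,\ell}$, $z_{t,j}$, and the constant $1$. Since $\mathbb{E}\|\mathbf{z}\|^2<\infty$ and the process is stationary and ergodic, each such coefficient converges a.s.\ to its expectation, which is the corresponding coefficient of $\mathrm{MSE}_0$. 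Because the polynomial is jointly continuous in its coefficients and in the parameters, convergence is uniform over every bounded subset of the parameter space, in particular over $\mathcal{K}:=\{\|\mathbf{a}\|=1\}\times\{\|\mathbf{D}\|_F\leq M\}$.

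With (i) and (ii) in place, (iii) is classical. Fix $\varepsilon>0$ and set $\mathcal{K}_\varepsilon:=\{(\mathbf{a},\mathbf{D})\in\mathcal{K}:d((\mathbf{a},\mathbf{D}),\mathcal{I})\geq\varepsilon\}$, which is compact. On it, the continuous function $\mathrm{MSE}_0$ attains a minimum $m^\ast$ strictly greater than $m_0:=\inf_{\|\mathbf{a}\|=1,\mathbf{D}}\mathrm{MSE}_0(\mathbf{a},\mathbf{D})$, by definition of $\mathcal{I}$. Pick any $(\mathbf{a}^\ast,\mathbf{D}^\ast)\in\mathcal{I}$; by optimality and uniform convergence,
\[
\mathrm{MSE}_0(\widehat{\mathbf{a}},\widehat{\mathbf{D}})=\mathrm{MSE}(\widehat{\mathbf{a}},\widehat{\mathbf{D}})+o(1)\leq \mathrm{MSE}(\mathbf{a}^\ast,\mathbf{D}^\ast)+o(1)\to m_0,
\]
almost surely. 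If $d((\widehat{\mathbf{a}},\widehat{\mathbf{D}}),\mathcal{I})\geq\varepsilon$ along a subsequence, the left side is eventually $\geq m^\ast>m_0$, a contradiction. Hence $d((\widehat{\mathbf{a}},\widehat{\mathbf{D}}),\mathcal{I})\to 0$ a.s.

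The main obstacle is step (i): obtaining a bound on $\|\widehat{\mathbf{D}}\|_F$ that is uniform in $\mathbf{a}$ on the unit sphere. This requires upgrading the pointwise convergence of the sample Gram matrix to a uniform one, which hinges on exploiting the polynomial dependence on $\mathbf{a}$ together with compactness of $\{\|\mathbf{a}\|=1\}$ and the uniform positivity of $\lambda_{\min}(\mathcal{S}(\mathbf{a}))$ afforded by Condition \ref{condition:eig}. Once this is in hand, the remaining steps are the standard Wald recipe.
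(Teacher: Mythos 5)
Your proof is correct and follows the same Wald-type skeleton as the paper: eventual a.s.\ boundedness of $\widehat{\mathbf{D}}$ via coercivity of the empirical MSE in $\mathbf{D}$ (uniform in $\mathbf{a}$ on the sphere), a uniform law of large numbers on the resulting compact set, and the standard compactness/continuity contradiction. Steps (i) and (iii) match the paper's Lemmas \ref{lemma:conv_unif_cov}--\ref{lemma:liminf_MSE}, Lemma \ref{lemma:B_bounded} and the final argument almost verbatim (your Hessian phrasing of coercivity is interchangeable with the paper's triangle-inequality bound $\text{MSE}^{1/2}\geq \Vert \mathbf{F}_{k_1,k_2}\mathbf{D}\Vert_F/\sqrt{T-(k_1+k_2)}-\Vert\mathbf{Z}_{k_1+k_2}\Vert_F/\sqrt{T-(k_1+k_2)}$, and your comparison point $\text{MSE}(\mathbf{a}_0,\mathbf{0})$ is legitimate since $\text{MSE}(\mathbf{a},\mathbf{0})$ does not depend on $\mathbf{a}$). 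The genuine divergence is in step (ii): the paper proves the uniform convergence $\sup_{\Vert\mathbf{a}\Vert=1,\Vert\mathbf{D}\Vert_F\leq M}|\mathbb{P}_T L_{\mathbf{a},\mathbf{D}}-\mathbb{P}L_{\mathbf{a},\mathbf{D}}|\to 0$ by empirical-process machinery (Lemma \ref{lemma:conv_unif}: the loss class is VC-major with integrable envelope, then Proposition 1 of \cite{VCergod} for ergodic sequences), whereas you exploit the fact that $\text{MSE}(\mathbf{a},\mathbf{D})$ is a fixed polynomial in the parameters whose finitely many coefficients are ergodic averages of integrable products $z_{t,j}z_{s,\ell}$ and $z_{t,j}$; a.s.\ convergence of each coefficient plus boundedness of the monomials on a compact set gives the uniform LLN directly. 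Your route is more elementary and self-contained (no appeal to Glivenko--Cantelli theory for dependent data), while the paper's VC argument is the one that would generalize to non-polynomial losses such as the robust variants mentioned in Section \ref{sec:conclu}. Both are valid; no gap.
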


\section{Consistency in the dynamic factor model}

\label{sec:cons:dfm} In this section we deal with the interesting case where
the series follow a stationary dynamic factor model. In Theorem
\ref{theo:common} we prove a consistency result for this situation:
asymptotically, when both the number of series and the sample size go to
infinity, the reconstruction obtained with ODPC converges in mean-square to
the common part of the factor model.

Suppose we have observations, $\mathbf{z}_{1}, \dots, \mathbf{z}_{T}$,
$\mathbf{z}_{t}^{\prime}=(z_{t,1},\dots,z_{t,m})$, of a double indexed
stochastic process $\lbrace z_{t,j} : t\in\mathbb{Z}, j \in\mathbb{N} \rbrace
$. Consider the following dynamic factor model with one factor, say $f_{t},$
and a finite dimensional factor space. That is,
\[
z_{t,j}=\sum\limits_{h=0}^{k_{2}}b_{h,j}f_{t-h}+e_{t,j},\quad t=1,\dots
,T,\quad j=1,\dots,m,
\]
where the $e_{t,j}$ for $j=1,...,m$ and $f_{t}$ are stationary processes and
$b_{h,j}$ the factor loadings. This can be expressed in the form of a factor
model, with $k_{2}+1$ static factors, as
\[
\mathbf{z}_{t}=\mathbf{B}^{\prime}\mathbf{f}_{t}+\mathbf{e}_{t},\quad
t=1,\dots,T,
\]
where $\mathbf{e}_{t}=(e_{t,1},\dots,e_{t,m})^{\prime}$, $\mathbf{B}%
\in\mathbb{R}^{(k_{2}+1)\times m}$ is the matrix with entries $b_{h,j}$ and
$\mathbf{f}_{t}=(f_{t},\dots,f_{t-k_{2}})^{\prime}$. For $h=0,\dots,k_{2}$ let
$\mathbf{b}_{h}=(b_{h,1},\dots,b_{h,m})^{\prime}$. The term $\boldsymbol{\chi
}_{t} = \mathbf{B}^{\prime}\mathbf{f}_{t}$ is usually called the common part
of the model. We will need the following assumptions.

\begin{condition}
$ $

\begin{itemize}
\item[(a)] $\mathbf{B}\mathbf{B}^{\prime}/m\rightarrow\mathbf{I}_{k_{2}+1}$.

\item[(b)] $\mathbf{e}_{t}$ and $f_{t}$ are second order stationary,
$\mathbb{E}\mathbf{e}_{t}=\mathbf{0}_{m}$ and $\mathbb{E}f_{t}=0$. Let
$\boldsymbol{\Sigma}^{e}(l)$ be the lag $l$ autocovariance matrix of
$\mathbf{e}_{t}$. Then $\lambda_{max}(\boldsymbol{\Sigma}^{e}(0))=O(1)$.

\item[(c)] $\mathbf{e}_{t}$ is uncorrelated with $\mathbf{f}_{t}$ at all leads
and lags.
\end{itemize}

\label{condition:DFM}
\end{condition}

Condition \ref{condition:DFM}(a) is a standardization assumption. It appears,
for example, in \cite{Pena87}. See also \cite{BaiNg}. Conditions
\ref{condition:DFM}(b) and (c) allow for weak cross-sectional correlations in
the idiosyncratic part.

The following theorem shows that the population reconstruction mean squared
error of the ODPC procedure is essentially bounded by the mean variance of the
idiosyncratic part. This can also be interpreted as a sequential limit
asymptotic: first let $T$ go to infinity for fixed $m$ and then let $m$ go to
infinity. See \cite{Connor} for another example of sequential limit
asymptotics. To keep the notation light, the theorem is stated and proved for
the case in which an intercept $\boldsymbol{\alpha}$ is not included in the
definition of the ODPC; the adjustments to include $\boldsymbol{\alpha}$ are straightforward.

\begin{theorem}
\label{theo:dfm} Assume Conditions \ref{condition:eig} and \ref{condition:DFM}
hold. Then as $m\rightarrow\infty$
\[
\frac{1}{m}\text{MSE}_{0}(\mathbf{a}^{\ast},\mathbf{B}^{\ast})\leq\frac{1}%
{m}\sum_{j=1}^{m}\mathbb{E}e_{t,j}^{2}+o(1),\quad\text{for all }%
(\mathbf{a}^{\ast},\mathbf{B}^{\ast})\in\mathcal{I}.
\]

\end{theorem}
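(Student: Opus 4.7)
\textbf{Proof plan for Theorem \ref{theo:dfm}.} The plan is to upper-bound the minimum $\text{MSE}_0$ by constructing an admissible candidate $(\tilde{\mathbf{a}},\tilde{\mathbf{B}})$ that exactly recovers the factor up to scale, and then compute its reconstruction error directly from the dynamic factor model. Since $(\mathbf{a}^\ast,\mathbf{B}^\ast)$ is a minimizer, the bound transfers to any element of $\mathcal{I}$.

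\textbf{Choice of candidate.} Let $\lambda_m=\|\mathbf{b}_0\|$. Define $\tilde{\mathbf{a}}\in\mathbb{R}^{m(k_1+1)}$ by placing $\mathbf{b}_0/\lambda_m$ in the zero-lag block and zeros elsewhere, so that $\|\tilde{\mathbf{a}}\|=1$ and $\tilde f_t=\tilde{\mathbf{a}}^\prime\mathbf{x}_t=\mathbf{b}_0^\prime\mathbf{z}_t/\lambda_m$. Define $\tilde{\mathbf{B}}$ by $\tilde b_{h,j}=b_{h,j}/\lambda_m$. The aim is to expand $\tilde z_{t,j}=\sum_{h=0}^{k_2}\tilde b_{h,j}\tilde f_{t-h}$ using the factor representation $\mathbf{z}_s=\mathbf{B}^\prime\mathbf{f}_s+\mathbf{e}_s$ and show it approximates the common part $\sum_h b_{h,j}f_{t-h}$ up to terms that are negligible after summing over $j$.

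\textbf{Error decomposition.} Writing $\mu_{h'}=\mathbf{b}_0^\prime\mathbf{b}_{h'}/m$, Condition \ref{condition:DFM}(a) gives $\mu_{h'}=\delta_{0h'}+\epsilon_{0h',m}$ with $\epsilon_{0h',m}\to 0$ and $\lambda_m^2/m=1+\epsilon_{00,m}\to 1$. Substituting, one checks that $(m/\lambda_m^2)\mu_0=1$ exactly, which gives the clean split
$$z_{t,j}-\tilde z_{t,j}=A_{t,j}+e_{t,j}+B_{t,j},$$
where $A_{t,j}=-(m/\lambda_m^2)\sum_{h}b_{h,j}\sum_{h'\ge 1}\epsilon_{0h',m}f_{t-h-h'}$ is the off-diagonal common-part contamination, and $B_{t,j}=-\lambda_m^{-2}\sum_h b_{h,j}\,\mathbf{b}_0^\prime\mathbf{e}_{t-h}$ is the idiosyncratic contamination that leaks into the component. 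By Condition \ref{condition:DFM}(c) the cross products $\mathbb{E}(A_{t,j}e_{t,j})$ and $\mathbb{E}(A_{t,j}B_{t,j})$ vanish, so it remains to control $\sum_j\mathbb{E}A_{t,j}^2$, $\sum_j\mathbb{E}B_{t,j}^2$ and $\sum_j\mathbb{E}(e_{t,j}B_{t,j})$.

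\textbf{Bounding the three terms.} For $A$, using stationarity of $f_t$ to bound its autocovariances by $\sigma_f^2$, Cauchy--Schwarz on the finite sums over $h,h'$, and $\sum_{h}\|\mathbf{b}_h\|^2=O(m)$ from Condition \ref{condition:DFM}(a), yields $\sum_j\mathbb{E}A_{t,j}^2\le O(1)\,\eta_m^2\sum_h\|\mathbf{b}_h\|^2=o(m)$, where $\eta_m=\sum_{h'=1}^{k_2}|\epsilon_{0h',m}|\to 0$. For $B$, the Cauchy--Schwarz inequality $|\mathbf{u}^\prime\boldsymbol{\Sigma}^e(l)\mathbf{v}|\le\sqrt{\mathbf{u}^\prime\boldsymbol{\Sigma}^e(0)\mathbf{u}}\sqrt{\mathbf{v}^\prime\boldsymbol{\Sigma}^e(0)\mathbf{v}}$ together with $\lambda_{\max}(\boldsymbol{\Sigma}^e(0))=O(1)$ from Condition \ref{condition:DFM}(b) gives $|\mathbf{b}_0^\prime\boldsymbol{\Sigma}^e(h'-h)\mathbf{b}_0|=O(\lambda_m^2)=O(m)$, from which $\sum_j\mathbb{E}B_{t,j}^2=O(1)=o(m)$. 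The same Cauchy--Schwarz and the bound $\|\boldsymbol{\Sigma}^e(l)\|\le\lambda_{\max}(\boldsymbol{\Sigma}^e(0))$ handle $|\sum_j\mathbb{E}(e_{t,j}B_{t,j})|=O(1)$. Combining, $\text{MSE}_0(\tilde{\mathbf{a}},\tilde{\mathbf{B}})=\sum_j\mathbb{E}e_{t,j}^2+o(m)$; dividing by $m$ and using the optimality of $(\mathbf{a}^\ast,\mathbf{B}^\ast)$ closes the argument.

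\textbf{Main obstacle.} The delicate point is the scaling: the bounds $\mathbf{b}_0^\prime\mathbf{b}_{h'}/m\to\delta_{0h'}$ come without any rate, and $\lambda_m\asymp\sqrt m$ is a large multiplier, so several expressions of the form $(m/\lambda_m^2)\times(\text{small})$ must be handled. Matching the diagonal coefficient exactly through the choice $\tilde b_{h,j}=b_{h,j}/\lambda_m$ (so that $(m/\lambda_m^2)\mu_0=1$) is what prevents any $O(\sqrt m)$ residual from surviving; the remaining $\epsilon_{0h',m}$ terms enter only quadratically, yielding the $o(m)$ bound without a rate assumption.
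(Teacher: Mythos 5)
Your proposal is correct and follows essentially the same route as the paper: the same candidate $(\widetilde{\mathbf{a}},\mathbf{B}/\Vert\mathbf{b}_{0}\Vert)$ with $\mathbf{b}_{0}/\Vert\mathbf{b}_{0}\Vert$ in the zero-lag block, the same decomposition into a common-part approximation error, an idiosyncratic leakage term, and cross terms killed by Condition \ref{condition:DFM}(c), with the same $O(m)\cdot o(1)$ and $O(1)$ bounds from Conditions \ref{condition:DFM}(a) and (b). The only cosmetic difference is that you bound the common-part error coordinate-wise via the explicit $\epsilon_{0h',m}$ terms where the paper argues at the vector level and invokes bounded convergence; both yield the same $o(m)$ conclusion.
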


The following technical conditions are needed to ensure that the decomposition
$\mathbf{z}_{t}= \boldsymbol{\chi}_{t} + \mathbf{e}_{t}$ is unique. See
Theorem B of \cite{Forni2015}.

\begin{condition}
$ $

\begin{itemize}
\label{condition:DFM:exist}

\item[(a)] For each $m$, $\mathbf{z}_{t}$ is a second order $m$-dimensional
stationary process that has a spectral density.

\item[(b)] Let $\boldsymbol{\Sigma}(0)$ be the covariance matrix of
$\mathbf{z}_{t}$. Let $\lambda_{m,j}^{\mathbf{z}}$ be its $j$-th eigenvalue
and let $\lambda_{j}^{\mathbf{z}} = \sup_{m \in\mathbb{N}} \lambda
_{m,j}^{\mathbf{z}}$. Then $\lambda_{k_2+1}^{\mathbf{z}}=\infty$ and
$\lambda_{k_2+2}^{\mathbf{z}}<\infty$.
\end{itemize}
\end{condition}

If we further assume that the idiosyncratic disturbances are cross-sectionally
uncorrelated, we can prove that the ODPC is able to recover the common part of
the dynamic factor model asymptotically.

\begin{condition}
$ $

\begin{itemize}
\item[(a)] There exists $L>0$ such that $\mathbb{E}z_{t,j}^{2}\leq L$ for all
$j$.

\item[(b)] $\boldsymbol{\Sigma}^{e}(0)$ is a diagonal matrix.
\end{itemize}

\label{condition:common}
\end{condition}

\begin{theorem}
\label{theo:common} Assume Conditions \ref{condition:eig}, \ref{condition:DFM}%
, \ref{condition:DFM:exist} and \ref{condition:common} hold. Then as
$m\rightarrow\infty$
\[
\frac{1}{m}\mathbb{E}\Vert\mathbf{B}^{\prime}\mathbf{f}_{t}-
\widehat{\mathbf{z}}_{t}\Vert^{2}\rightarrow0,
\]
where $\widehat{\mathbf{z}}_{t}=\mathbf{z}^{R}_{t}(\mathbf{a}^{\ast},
\mathbf{B}^{\ast})$, for $(\mathbf{a}^{\ast}, \mathbf{B}^{\ast})\in
\mathcal{I}$.
\end{theorem}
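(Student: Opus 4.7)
My strategy is to decompose the reconstruction error with respect to the common part into the reconstruction error with respect to $\mathbf{z}_t$ (already controlled by Theorem~\ref{theo:dfm}) plus cross-terms involving the idiosyncratic noise. Using $\mathbf{z}_t=\mathbf{B}'\mathbf{f}_t+\mathbf{e}_t$, one writes
\begin{equation*}
\|\mathbf{z}_t-\widehat{\mathbf{z}}_t\|^2=\|\mathbf{B}'\mathbf{f}_t-\widehat{\mathbf{z}}_t\|^2+2\langle\mathbf{B}'\mathbf{f}_t-\widehat{\mathbf{z}}_t,\mathbf{e}_t\rangle+\|\mathbf{e}_t\|^2.
\end{equation*}
Taking expectations, eliminating $\mathbb{E}\langle\mathbf{B}'\mathbf{f}_t,\mathbf{e}_t\rangle=0$ via Condition~\ref{condition:DFM}(c), and dividing by $m$ yields
\begin{equation*}
\frac{1}{m}\mathbb{E}\|\mathbf{B}'\mathbf{f}_t-\widehat{\mathbf{z}}_t\|^2=\frac{1}{m}\bigl[\text{MSE}_0(\mathbf{a}^*,\mathbf{B}^*)-\mathbb{E}\|\mathbf{e}_t\|^2\bigr]+\frac{2}{m}\mathbb{E}\langle\widehat{\mathbf{z}}_t,\mathbf{e}_t\rangle.
\end{equation*}
Theorem~\ref{theo:dfm} makes the bracketed quantity $o(1)$, and since the left-hand side is nonnegative, it suffices to prove $\frac{1}{m}|\mathbb{E}\langle\widehat{\mathbf{z}}_t,\mathbf{e}_t\rangle|=o(1)$.

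To bound this cross-term, I would expand $\widehat{z}_{t,j}=\sum_v b^{*}_{v,j}\widehat{f}^{*}_{t-v}$ with $\widehat{f}^{*}_s=\sum_h (\mathbf{a}^{*}_h)'(\mathbf{B}'\mathbf{f}_{s-h}+\mathbf{e}_{s-h})$; a second application of Condition~\ref{condition:DFM}(c) kills the factor contribution and leaves
\begin{equation*}
\mathbb{E}\langle\widehat{\mathbf{z}}_t,\mathbf{e}_t\rangle=\sum_{v=0}^{k_2^{*}}\sum_{h=0}^{k_1^{*}}(\mathbf{b}^{*}_v)'\boldsymbol{\Sigma}^e(v+h)\mathbf{a}^{*}_h,
\end{equation*}
where $\mathbf{a}^{*}_h,\mathbf{b}^{*}_v\in\mathbb{R}^m$ denote the row blocks of $\mathbf{a}^{*}$ and $\mathbf{B}^{*}$ indexed by $h$ and $v$. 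The Cauchy--Schwarz bound $\|\boldsymbol{\Sigma}^e(l)\|\le\|\boldsymbol{\Sigma}^e(0)\|$ valid for any stationary sequence, combined with Condition~\ref{condition:DFM}(b), gives $\|\boldsymbol{\Sigma}^e(l)\|=O(1)$ uniformly in $l$. Bounding each summand by its spectral norm and using $\sum_h\|\mathbf{a}^{*}_h\|^2=1$ then produces $|\mathbb{E}\langle\widehat{\mathbf{z}}_t,\mathbf{e}_t\rangle|\le C\|\mathbf{B}^{*}\|_F$ for a constant $C$ depending only on $k_1^{*},k_2^{*}$ and the noise variance bound.

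It remains to show $\|\mathbf{B}^{*}\|_F^2=O(m)$, which yields $\frac{1}{m}|\mathbb{E}\langle\widehat{\mathbf{z}}_t,\mathbf{e}_t\rangle|=O(m^{-1/2})\to 0$ and finishes the proof. Letting $\mathbf{b}^{*(j)}=(b^{*}_{0,j},\ldots,b^{*}_{k_2^{*},j})'$ and $\widehat{\mathbf{f}}^{*}_t=(\widehat{f}^{*}_t,\ldots,\widehat{f}^{*}_{t-k_2^{*}})'$, one has $\widehat{z}_{t,j}=(\mathbf{b}^{*(j)})'\widehat{\mathbf{f}}^{*}_t$; the covariance matrix of $\widehat{\mathbf{f}}^{*}_t$ is the lower-right block of $\mathcal{S}(\mathbf{a}^{*})$, whose smallest eigenvalue is at least the positive constant $c=\inf_{\|\mathbf{a}\|=1}\lambda_{\min}(\mathcal{S}(\mathbf{a}))>0$ from Condition~\ref{condition:eig}. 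Hence $\mathbb{E}\|\widehat{\mathbf{z}}_t\|^2\ge c\|\mathbf{B}^{*}\|_F^2$, while Condition~\ref{condition:common}(a) and Theorem~\ref{theo:dfm} give $\mathbb{E}\|\widehat{\mathbf{z}}_t\|^2\le 2\mathbb{E}\|\mathbf{z}_t\|^2+2\,\text{MSE}_0(\mathbf{a}^{*},\mathbf{B}^{*})=O(m)$. The main obstacle is securing this $O(m)$ bound uniformly in $m$, which rests essentially on the $m$-independent positivity provided by Condition~\ref{condition:eig}; a secondary subtlety is that Condition~\ref{condition:common}(b) only imposes diagonality of $\boldsymbol{\Sigma}^e(0)$ and not of the lagged autocovariances, but the Cauchy--Schwarz inequality on lagged covariance matrices makes the lag-zero diagonality sufficient to control every term in the cross-term expansion.
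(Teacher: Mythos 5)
Your reduction to the cross term is exactly the paper's: the same expansion of $\text{MSE}_0(\mathbf{a}^*,\mathbf{B}^*)$ around $\mathbf{z}_t=\mathbf{B}'\mathbf{f}_t+\mathbf{e}_t$, the same appeal to Condition~\ref{condition:DFM}(c) and Theorem~\ref{theo:dfm}, and the same conclusion that it suffices to prove $\frac{1}{m}|\mathbb{E}\,\mathbf{e}_t'\widehat{\mathbf{z}}_t|\to 0$. From there you diverge, and the divergent part has a gap that you flag yourself but do not close. The bound $|\mathbb{E}\langle\widehat{\mathbf{z}}_t,\mathbf{e}_t\rangle|\leq C\Vert\mathbf{B}^{*}\Vert_F$ is fine (the estimate $\Vert\boldsymbol{\Sigma}^e(l)\Vert\leq\lambda_{max}(\boldsymbol{\Sigma}^e(0))=O(1)$ is correct), but the step $\Vert\mathbf{B}^{*}\Vert_F^2=O(m)$ requires $\mathbb{E}\Vert\widehat{\mathbf{z}}_t\Vert^2\geq c\,\Vert\mathbf{B}^{*}\Vert_F^2$ with $c=\inf_{\Vert\mathbf{a}\Vert=1}\lambda_{min}(\mathcal{S}(\mathbf{a}))$ bounded away from zero \emph{uniformly in $m$}. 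Condition~\ref{condition:eig} is a statement for each fixed $m$: it yields $c>0$ for that $m$ by a compactness argument, but says nothing about the sequence $c=c_m$ as $m\to\infty$. The hypotheses only bound $\lambda_{max}(\boldsymbol{\Sigma}^e(0))$ from above, so the smallest eigenvalue of the covariance of $\mathbf{x}_t$ --- and with it $c_m$ --- may tend to zero, in which case your final estimate, which is of order $(c_m m)^{-1/2}$, need not vanish. A further warning sign is that your argument never invokes Condition~\ref{condition:common}(b) (diagonality of $\boldsymbol{\Sigma}^e(0)$); if it were complete it would prove the theorem under strictly weaker hypotheses than the paper assumes.

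The paper closes the cross term by a different route that sidesteps $\Vert\mathbf{B}^{*}\Vert_F$ entirely. Since $(\mathbf{a}^{*},\mathbf{B}^{*})$ is optimal, $\widehat{z}_{t,j}$ is the orthogonal projection of $z_{t,j}$ onto the span of $(\widehat f_t,\dots,\widehat f_{t-k_2})$, whence $\mathbb{E}\widehat z_{t,j}^{\,2}\leq\mathbb{E}z_{t,j}^2\leq L$ by Condition~\ref{condition:common}(a). Considering the best linear predictor of $e_{t,j}$ based on $\widehat z_{t,j}$ and using that the $e_{t,j}$ are cross-sectionally uncorrelated with uniformly bounded variances (Conditions~\ref{condition:common}(b) and~\ref{condition:DFM}(b)), Lemmas~\ref{lemma:bound_cov} and~\ref{lemma:pred_ideo} show that predicting $m$ uncorrelated variables from a fixed number $k_2+1$ of regressors can lower the average variance by at most $O(k_2/m)$; this forces $(1/m)\sum_{j}(\mathbb{E}\widehat z_{t,j}e_{t,j})^2\to 0$ and hence the cross term to vanish. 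To salvage your route you would need an additional assumption guaranteeing an $m$-uniform lower bound on $\lambda_{min}(\mathcal{S}(\mathbf{a}))$, which the paper's argument deliberately avoids.
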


\section{Simulation study}

\label{sec:simu} In this Section, we compare the procedure proposed in this
paper (ODPC) with those of \cite{Forni2005} (FHLR), \cite{Forni2015} (FHLZ)
and \cite{SW2002} (SW) for forecasting multivariate time series.

We took $(T,m)\in\{50,100,200\}\times\{50,100,200\}$. This Monte Carlo design
includes difficult forecasting situations where the ratio $T/m$ is smaller
than one. For each combination of $T$ and $m$, we generated 500 vector time
series with $T+1$ periods of the following models.

\begin{enumerate}
\item[DFM1] The generating model is $z_{t,j}=c(\sin(2\pi j/m)f_{t}+\cos(2\pi
j/m)f_{t-1}+(j/m)f_{t-2}+f_{t-3})+u_{t,j}$ where the $u_{t,j}$ are i.i.d.
standard normal random variables. The factor $f_{t}$ is generated according to
a moving average process $f_{t}=v_{t}+\theta_{1}v_{t-1}+\theta_{2}v_{t-2}$,
where the $v_{t}$ are i.i.d. standard normals. $\theta_{2}$ is generated at
random uniformly on the interval $(-0.7,0.7)$. Then $\theta_{1}$ is generated
at random uniformly on the interval $(0,1-|\theta_{2}|)$. $c$ is chosen so
that the mean empirical variance of the common part is equal to one. This is a
stationary dynamic factor model with four static factors and one dynamic factor.

\item[DFM1AR] This model adds idiosyncratic AR(1) structure to DFM1. For each
$j$, $u_{t,j}$ follows an unit variance AR(1), where at each replication the
autoregression coefficient is generated at random, with uniform distribution
in $(-0.9,0.9)$.

\item[DFM2] The model is now $z_{t,j}=c(\sin(2\pi j/m)f_{t}+\cos(2\pi
j/m)f_{t-1}+(j/m)f_{t-2})+u_{t,j}$ where the $u_{t,j}$ are i.i.d. standard
normal random variables. The factor $f_{t}$ follows an autoregressive model
$f_{t}=1.4f_{t-1}-0.45f_{t-2}+v_{t}$, where the $v_{t}$ are i.i.d. standard
normals. As in model DFM1, $c$ is chosen so that the mean empirical variance
of the common part is equal to one. This is a stationary dynamic factor model
with three static factors and one dynamic factor.

\item[DFM2AR] As in model DFM1 we add idiosyncratic AR(1) structure to DFM2.
Again $u_{t,j}$ follows an unit variance AR(1) and at each replication the
parameter is chosen from an uniform distribution in $(-0.9,0.9)$.

\item[VARMA] We first generate $\mathbf{x}_{t}=\boldsymbol{\Lambda}%
\mathbf{x}_{t-1}+\mathbf{u}_{t}$, where the $\mathbf{u}_{t}$ are i.i.d
standard multivariate normal variables and $\boldsymbol{\Lambda}$ is generated
at random for each replication, as a diagonal matrix where the elements of the
diagonal are independent and generated at random with uniform distribution in
$(-0.9,0.9)$. We then take $\mathbf{z}_{t}=\mathbf{M}\mathbf{x}_{t}$, where
$\mathbf{M}$ is a lower triangular matrix of dimensions $m\times m$ filled
with ones. The $\mathbf{z}_{t}$ follow a stationary VARMA model. Finally, we
standardize the data so that is has empirical mean variance equal to one.
\end{enumerate}

For each estimator, and each combination of $T$ and $m$, using periods
$1,\dots,T$ we compute a forecast of each time series at period $T+1$ and the
corresponding prediction mean squared error (PMSE). For the dynamic factor
models, we only compute a forecast of the common part, that is, we did not
include any forecasting of the idiosyncratic component. We report the average
PMSE over the 500 replications.

The MATLAB code to compute FHLR was obtained from
\url{http://morgana.unimore.it/forni_mario/matlab.htm}. The Bartlett
lag-window size was taken as $[\sqrt{T+1}]$. The MATLAB code to compute FHLZ
was kindly provided by the authors. We used a triangular kernel, with window
size equal to $[(T+1)^{2/3}]$. The maximum order of the singular VARs was
taken to be 5, and the order was chosen using the BIC criterion. The number of
random permutations of the series was taken to be 30. We used our own
implementation of the SW procedure. For FHLR and FHLZ, the procedures were
applied to the data standardized to zero mean and unit variance. At the end,
the forecasts were transformed to the original units. The SW forecast was
obtained by projecting $\mathbf{z}_{T+1}$ on the estimated factors, as in
equation 16 of \cite{Forni2005}. For ODPC, to forecast future values of the
dynamic principal components we use the \texttt{auto.arima} and
\texttt{forecast.arima} functions from the \texttt{forecast} \texttt{R}
package \citep{forecast} to automatically fit (possibly seasonal) ARIMA models
to the dynamic components and obtain their forecasts.

For the factor models we use the known number of factors and lags. For the
ODPC procedure, we take one component with $k_{1}=k_{2}$, equal to the number
of lags in the factors in the generated model. Results for each combination of
$(T,m)$, are shown in Tables \ref{tab:sim:DFM1:MSE} and \ref{tab:sim:DFM2:MSE}.

For the VARMA, model we use three different combinations of number of
components and lags. We have used one, two and five dynamic factors for ODPC,
FHLR and FHLZ and the equivalent, or larger, number of static factors for SW.
Results are shown in Table \ref{tab:sim:VARMA:MSE}. The second rows show, for
ODPC (number of components, $k_{1}$, $k_{2}$), for FHLR (number of dynamic
factors, number of static factors), for FHLZ (number of dynamic factors) and
for SW (number of static factors).

Highlighted in black is the best result. An asterisk indicates that the
difference with the runner up is significant at the 95\% level, taking the
difference of the squared forecasting errors. Table \ref{tab:sim:DFM1:MSE}
shows that for models DFM1 and DFM1AR, with MA factors, the method ODPC always
works better than the competitors, although the differences tend to be small.
The largest difference with respect to the runner-up in both models is around
10\%. Table \ref{tab:sim:DFM2:MSE} shows that for AR factors, models DFM2 and
DFM2AR, the errors of all the methods are smaller, as expected, and ODPC
performs similar to SW and slightly better than the others. However, note that
ODPC makes forecasts with only one dynamic component whereas the SW procedure
requires three or four static factors. In Table \ref{tab:sim:VARMA:MSE} again
ODPC is most often the winner, although the differences with the runner-up are
not large.

In summary, the proposed procedure seems to work well, both for dynamic factor
models and for VARMA models with large common dependency.

\newpage

\begin{table}[th]
\centering
\begin{tabular}
[c]{llllllllll}\hline
&  & DFM1 &  &  &  & DFM1AR &  &  & \\\hline
$T$ & $m$ & ODPC & FHLR & FHLZ & SW & ODPC & FHLR & FHLZ & SW\\\hline
50 & 50 & \textbf{1.460*} & 1.510 & 1.742 & 1.542 & \textbf{1.429} & 1.460 &
1.617 & 1.493\\
& 100 & \textbf{1.411*} & 1.454 & 1.637 & 1.533 & \textbf{1.469} & 1.495 &
1.595 & 1.535\\
& 200 & \textbf{1.436} & 1.446 & 1.643 & 1.527 & \textbf{1.413*} & 1.464 &
1.575 & 1.509\\
100 & 50 & \textbf{1.327*} & 1.460 & 1.569 & 1.489 & \textbf{1.341*} & 1.453 &
1.543 & 1.479\\
& 100 & \textbf{1.294*} & 1.367 & 1.512 & 1.439 & \textbf{1.306*} & 1.411 &
1.509 & 1.452\\
& 200 & \textbf{1.262*} & 1.330 & 1.494 & 1.378 & \textbf{1.310*} & 1.377 &
1.481 & 1.418\\
200 & 50 & \textbf{1.275*} & 1.373 & 1.484 & 1.413 & \textbf{1.271*} & 1.397 &
1.457 & 1.396\\
& 100 & \textbf{1.216*} & 1.313 & 1.403 & 1.365 & \textbf{1.283*} & 1.384 &
1.489 & 1.412\\
& 200 & \textbf{1.232*} & 1.302 & 1.448 & 1.360 & \textbf{1.240*} & 1.316 &
1.413 & 1.370\\\hline
\end{tabular}
\caption{Means of the 1-step ahead PMSEs of ODPC, FHLR, FHLZ and SW for models
DFM1 and DFM1AR.}%
\label{tab:sim:DFM1:MSE}%
\end{table}

\begin{table}[th]
\centering
\begin{tabular}
[c]{llllllllll}\hline
&  & DFM2 &  &  &  & DFM2AR &  &  & \\\hline
$T$ & $m$ & ODPC & FHLR & FHLZ & SW & ODPC & FHLR & FHLZ & SW\\\hline
50 & 50 & 1.268 & 1.286 & 1.712 & \textbf{1.243* } & \textbf{1.205} & 1.269 &
1.757 & 1.208\\
& 100 & \textbf{1.217} & 1.277 & 1.594 & 1.219 & 1.199 & 1.238 & 1.582 &
\textbf{1.184*}\\
& 200 & \textbf{1.185*} & 1.271 & 1.525 & 1.210 & \textbf{1.165} & 1.243 &
1.520 & 1.174\\
100 & 50 & 1.149 & 1.146 & 1.357 & \textbf{1.128*} & 1.132 & 1.147 & 1.375 &
\textbf{1.123}\\
& 100 & \textbf{1.119} & 1.151 & 1.322 & 1.124 & 1.110 & 1.136 & 1.309 &
\textbf{1.110}\\
& 200 & \textbf{1.103*} & 1.136 & 1.265 & 1.112 & \textbf{1.097} & 1.127 &
1.280 & 1.100\\
200 & 50 & 1.110 & 1.095 & 1.202 & \textbf{1.086*} & 1.120 & 1.113 & 1.260 &
\textbf{1.097*}\\
& 100 & 1.092 & 1.087 & 1.174 & \textbf{1.078*} & 1.083 & 1.089 & 1.201 &
\textbf{1.076*}\\
& 200 & \textbf{1.073*} & 1.089 & 1.174 & 1.079 & \textbf{1.063} & 1.082 &
1.188 & 1.066\\\hline
\end{tabular}
\caption{Means of the 1-step ahead PMSEs of ODPC, FHLR, FHLZ and SW for models
DFM2 and DFM2AR.}%
\label{tab:sim:DFM2:MSE}%
\end{table}

\begin{sidewaystable}[ht]
\centering
{\small
\begin{tabular}{llllllllllllll}
\hline
$T$ & $m$ & ODPC & FHLR & FHLZ & SW & ODPC & FHLR & FHLZ & SW & ODPC & FHLR & FHLZ & SW\\  \hline
&     & (1, 1, 1) & (1, 2) & (1)& (2) & (2, 1, 1) & (2, 6)& (2) & (6) & (5, 1, 1) & (5, 10) & (5)	& (10) \\
\hline
50 & 50 &\textbf{0.997} & 1.024 & 1.042 & 1.042 & 1.010 & 1.017 &\textbf{0.978} & 1.052 & 1.020 & 0.995 & \textbf{0.972} & 1.029 \\
& 100 & 1.001 & \textbf{0.992} & 1.025 & 1.007 & 1.010 & 1.027 & \textbf{0.972} & 1.049 & 1.041 & 1.026 & \textbf{0.972*} & 1.060 \\
& 200 & \textbf{0.999} & 1.047 & 1.015 & 1.050 & 1.010 & 1.034 &\textbf{0.974} & 1.063 & 1.017 & 1.035 & \textbf{0.996} & 1.079 \\
100 & 50 & \textbf{0.863} & 0.909 & 0.891 & 0.926 & \textbf{0.854} & 0.858 & 0.862 & 0.900 & 0.854 & \textbf{0.835} & 0.872 & 0.866 \\
& 100 & \textbf{0.926*} & 1.000 & 0.976 & 1.021 & \textbf{0.918} & 0.986 & 0.933 & 1.013 & \textbf{0.920} & 0.956 & 0.959 & 0.991 \\
& 200 & \textbf{0.938*} & 1.038 & 0.993 & 1.021 & \textbf{0.933} & 1.041 & 0.943 & 1.071 & \textbf{0.933*} & 1.031 & 1.006 & 1.044 \\
200 & 50 & \textbf{0.856} & 0.898 & 0.905 & 0.906 & 0.844 & 0.840 & \textbf{0.833} & 0.880 & 0.836 & \textbf{0.811} & 0.860 & 0.851 \\
& 100 & \textbf{0.826*} & 0.951 & 0.884 & 0.963 & \textbf{0.818} & 0.932 & 0.832 & 0.954 & \textbf{0.812*} & 0.912 & 0.887 & 0.921 \\
& 200 & \textbf{0.893*} & 0.996 & 0.950 & 0.997 & \textbf{0.878} & 0.969 & 0.882 & 0.995 &\textbf{0.873*} & 0.973 & 0.919 & 0.976 \\
\hline
\end{tabular}
\caption{Means of the 1-step ahead PMSEs of ODPC, FHLR, FHLZ and SW for the VARMA model.}
\label{tab:sim:VARMA:MSE}
}
\end{sidewaystable}

\FloatBarrier

\section{An empirical example}

\label{sec:real_data} In this section, we compare the forecasting performances
of the procedures considered in the previous section when applied to a panel
of real macroeconomic variables. The data set, downloaded from
\url{https://research.stlouisfed.org/econ/mccracken/fred-databases/}, consists
of several key monthly macroeconomic variables for the US economy used by
Stock and Watson (2002) . A full description of the data can be found in the
website. See also \cite{FRED}. The data was corrected for outliers and
transformed to stationarity using the MATLAB script provided in the
aforementioned website. We kept only periods from January 1960 to February
2014 and removed series with missing data, resulting in a balanced panel with
$T=650$ observations on $m=94$ series. Let $\mathbf{Z}=\lbrace z_{t,j}
\rbrace$ be the resulting panel.

Following \cite{FRED} we used four series as target variables for forecasting:
CLAIMSx, initial jobless claims, S\&P: indust, the S\&P Industrial Index, M2REAL,  Real M2 Money Stock and INDPRO, Industrial Production Index, all in log levels. Let $j$ be the
index of any of the target variables. Since the targets are transformed by
taking first differences of the logarithm, the target at time $t+h$ is
$z_{t+1,j}+\dots+z_{t+h,j}$.

We considered one and two years forecast horizons, $h=12, 24$, and selected the most recent
out of sample forecast period of \cite{FRED}, the one going from 2008:01 to
2014:12. Thus, we fit the four procedures discussed in the previous Monte
Carlo section using sample periods $1,\dots,(T-h-t)$ for each $t=0,\dots,83$
to predict $T-t$ that is, we use a rolling seven year window, covering the
recovery of the US economy. We compute the $h$-steps ahead forecasts of the
whole panel, and then compare the predicted values of the target variable with
the actual values. Let
\[
E_{t,h}=(z_{T-t-h+1,6}+\dots+z_{T-t,6}-(\widehat{z}_{T-t-h+1,6|T-h-t}%
+\dots+\widehat{z}_{T-t,6|T-h-t}))
\]
be the forecasting error of period $T-t$ using information up to period
$T-h-t$, for $t=0,\dots,83$. We measure the performance of each procedure by
\[
\left(  \frac{1}{84}\sum\limits_{t=0}^{83}E_{t,h}^{2}\right)  ^{1/2}.
\]
As in \cite{FRED} we compare the forecasts obtained by the different procedures
using the first factor with lags. Thus, we did not try  to forecast the
idiosyncratic part, as our objective is to compare the performance of the
methods in forecasting the common component in the series. We computed: ODPC
with one component and up to three lags, FHLR with one dynamic factor and up
to four static factors (this amount to assuming that the dynamic factor is
loaded with up to three lags), FHLZ with one dynamic factor and SW with one
static factor and up to three of its lags. We also computed a one-dimensional
SARIMA forecast, by automatically fitting a SARIMA model using the
\texttt{auto.arima} function from the \texttt{forecast} \texttt{R} package,
using the default settings.

We report the root means squared forecasting errors relative to those of the one-dimensional SARIMA forecast. Results are shown in Table \ref{tab:real:RMSE}. For a one year horizon, $h=12$, in half of the four series the best forecast is obtained with ODPC, that reduces the univariate forecast errors by 12.1\% in CLAIMSx, and by 2.9\% in S\&P Indust. The largest reduction of error with respect to the univariate forecast is in M2REAL, where all the procedures reduce the forecast error between 14,9\% and 11,9\% and the winner in this case is SW. For INDPRO the maximum error reduction is 8,5\% and is obtained by FHLR. For the two year horizon, $h=24$, for three of the four series, the best forecast is obtained with ODPC, reducing the univariate forecast errors by 28.8\% in CLAIMSx, 10.3\% in S\&P indust and 13.2\% in INDPRO. For M2REAL the best forecast is again obtained with SW, achieving a 19.8\% reduction in error with respect to the univariate method.
The conclusion is that the results of the four precedures in this data set are similar with a small advantage of ODPC.

\begin{table}[ht]
\centering
\begin{tabular}{lllll}
  \hline
 & CLAIMSx & S\&P Indust & M2REAL & INDPRO \\ 
  \hline
h = 12 &          &           &         & \\
  \hline
ODPC 1 & \textbf{0.879} & \textbf{0.971} & 0.901 & 1.001 \\ 
  ODPC 2 & 0.894 & 0.972 & 0.891 & 1.025 \\ 
  ODPC 3 & 0.913 & 0.981 & 0.869 & 1.041 \\ 
  FHLR 1 & 1.003 & 1.019 & 0.881 & 0.944 \\ 
  FHLR 2 & 0.904 & 0.997 & 0.897 & 0.931 \\ 
  FHLR 3 & 0.920 & 1.006 & 0.903 & \textbf{0.915} \\ 
  FHLZ  & 0.994 & 1.003 & 0.868 & 0.960 \\ 
  SW 1 & 1.004 & 1.006 & 0.853 & 1.033 \\ 
  SW 2 & 1.018 & 1.018 & 0.852 & 1.046 \\ 
  SW 3 & 1.028 & 1.029 & \textbf{0.851} & 1.058 \\ 
  \hline
  h = 24 &          &           &         & \\ \hline
  ODPC 1 &\textbf{0.712} & 0.900 & 0.932 & \textbf{0.868} \\ 
  ODPC 2 & 0.722 & \textbf{0.897} & 0.930 & 0.886 \\ 
  ODPC 3 & 0.729 & 0.908 & 0.897 & 0.903 \\ 
  FHLR 1 & 1.006 & 1.003 & 0.848 & 0.933 \\ 
  FHLR 2 & 0.877 & 0.974 & 0.855 & 0.888 \\ 
  FHLR 3 & 0.892 & 0.982 & 0.849 & 0.883 \\ 
  FHLZ & 1.002 & 0.994 & 0.855 & 0.938 \\ 
  SW 1 & 1.086 & 1.056 & 0.803 & 1.035 \\ 
  SW 2 & 1.101 & 1.067 & \textbf{0.802} & 1.047 \\ 
  SW 3 & 1.116 & 1.078 & 0.802 & 1.058 \\ 
   \hline
\end{tabular}
\caption{RMSE forecasting errors for different number of lags,
                  relative to the RMSE of the one dimensional SARIMA forecast.} 
\label{tab:real:RMSE}
\end{table}

\FloatBarrier

\section{Choosing the number of components and lags}
\label{sec:lags}
In practice, the number of components and lags needs to be chosen. To simplify
the notation, assume that for each component $k_{1}^{i}=k_{2}^{i}$, that is,
the number of lags of $\mathbf{z}_{t}$ used to define the dynamic principal
component and the number of lags of $\widehat{f}_{t}$ used to reconstruct the
original series are the same.

One possible approach is to minimize the cross-validated forecasting error in
a stepwise fashion. Choose a maximum number of lags $K_{max},$ and, starting
with one component, search for the value $k^{\ast}$ among $0,\dots,K_{max}$
that gives the minimum cross-validated forecasting error. Then, fix the first
component computed with $k^{\ast}$ lags and repeat the procedure with the
second component. If the optimal cross-validated forecasting error using the
two components is larger than the one using only one component, stop;
otherwise add a third component and proceed as before.

The same stepwise approach could be applied to minimize an information
criterion. This would reduce the computational burden significantly. The
following BIC type criterion could be used. Suppose we have computed $q$
dynamic principal components, each with $k_{1}^{i}=k_{2}^{i}=k^{i}$ lags. Let
$\widehat{y}_{t,j}=\widehat{\alpha}_{j}+\sum_{h=0}^{k^{q}}\widehat{\beta
}_{h,j}\widehat{f}_{t-h},t=2\sum_{i=1}^{q}k^{i}+1,\dots,T$ be the
reconstruction obtained, where $y_{t,j}=z_{t,j}$ for the first component and
will be equal to the residuals from the fit with the previous components
otherwise. Let $r_{t,j}=y_{t,j}-\widehat{y}_{t,j}$ be the residuals,
$\mathbf{R}_{q}$ be the corresponding matrix of residuals and
$\boldsymbol{\Sigma}_{q}=(\mathbf{R}_{q}^{\prime}\mathbf{R}_{q})/\left(
T-2\sum_{i=1}^{q}k^{i}\right)  $. Then for each $q$ choose the value $k^{\ast
}$ among $0,\dots,K_{max}$ that minimizes
\[
\text{BIC}_{k}=\left(  T-2\sum_{i=1}^{q}k^{i}\right)  \log\left(
\text{trace}(\boldsymbol{\Sigma}_{q})\right)  +m(2k+3)\log\left(
T-2\sum_{i=1}^{q}k^{i}\right)  .
\]

The performance of these alternatives will be the subject of further research.

\section{Conclusions and possible extensions}

\label{sec:conclu}

We have presented a new procedure for the dimension reduction of multivariate
time series. The main advantages with respect to other alternatives are that:
in the spirit of principal component analysis, it is not based on assuming any
particular model (parametric or not) for the data, but rather on finding
linear combinations of the observations with optimal reconstruction
properties; not being based on both lags and leads of the data, it is useful
for forecasting large sets of time series.

Moreover, the proposed procedure can be generalized in several directions.
First, since the MSE criterion used in the is paper is not robust, it can
substituted for the minimization of a robust scale. This can be achieved in a
similar way as in \cite{PenaYohai2016}.\ A simpler way to obtain robustness
would be to substitute the alternating least squares regressions by robust
regression estimators, for example MM-estimators \citep{MM87}. Second, to deal
with very large number of variables the estimation algorithm can be
regularized. For example, in each of the steps, the alternating least squares
regressions may be replaced by a regularized regressions, using, for example,
a Lasso procedure. This method will allow for a different number of lags in
different variables. Both modifications, for robustness and regularization,
can be combined using in each step a robust lasso procedure (see for example
\cite{SY17}). All these extensions require further research.

\section{Appendix}
\label{sec:appen}
This section includes the proofs of all the results stated in the paper.

\begin{lemma}
\label{lemma:conv_unif_cov}
\begin{align*}
\sup_{\Vert\mathbf{a}\Vert=1} \left\Vert \frac{\mathbf{F}_{k_{1}, k_{2}%
}^{\prime}\mathbf{F}_{k_{1}, k_{2}}}{T-(k_{1} + k_{2})} - \mathcal{S}%
(\mathbf{a})\right\Vert _{F} \overset{a.s.}{\rightarrow} 0.
\end{align*}

\end{lemma}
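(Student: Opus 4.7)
The plan is to exploit the fact that every entry of $\mathbf{F}_{k_1,k_2}^{\prime}\mathbf{F}_{k_1,k_2}/(T-(k_1+k_2))$ is either a constant, a linear form in $\mathbf{a}$ with a random coefficient, or a quadratic form $\mathbf{a}^{\prime}\widehat{\mathbf{M}}_T\mathbf{a}$ whose kernel $\widehat{\mathbf{M}}_T$ does not depend on $\mathbf{a}$. Once this is observed, the supremum over the unit sphere can be controlled by the convergence of finitely many random vectors/matrices that do not involve $\mathbf{a}$, and Birkhoff's ergodic theorem gives the required almost sure convergence.

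Concretely, I would first write out the block structure of $\mathbf{F}_{k_1,k_2}^{\prime}\mathbf{F}_{k_1,k_2}/(T-(k_1+k_2))$. The $(1,1)$ entry is identically $1$, matching $\mathcal{S}(\mathbf{a})_{1,1}$, so it contributes $0$ to the difference. For $j=1,\dots,k_2+1$, the entry $(1,j+1)$ equals $\mathbf{a}^{\prime}\widehat{\boldsymbol{\mu}}^{(j)}_T$, where
\[
\widehat{\boldsymbol{\mu}}^{(j)}_T=\frac{1}{T-(k_1+k_2)}\sum_{t=1}^{T-(k_1+k_2)}\mathbf{x}_{t+k_1+k_2+1-j}.
\]
For $i,j=1,\dots,k_2+1$, the entry $(i+1,j+1)$ equals $\mathbf{a}^{\prime}\widehat{\mathbf{V}}_T^{(i,j)}\mathbf{a}$, where
\[
\widehat{\mathbf{V}}_T^{(i,j)}=\frac{1}{T-(k_1+k_2)}\sum_{t=1}^{T-(k_1+k_2)}\mathbf{x}_{t+k_1+k_2+1-i}\mathbf{x}_{t+k_1+k_2+1-j}^{\prime}.
\]
Since $(\mathbf{z}_t)$ is strictly stationary and ergodic and $\mathbf{x}_t$ is a measurable function of a finite stretch of the $\mathbf{z}_t$'s, the processes $(\mathbf{x}_t)$, $(\mathbf{x}_t\mathbf{x}_{t-l}^{\prime})$ are strictly stationary and ergodic as well. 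Hence Birkhoff's ergodic theorem applies, and $\widehat{\boldsymbol{\mu}}^{(j)}_T\to \mathbf{0}$ and $\widehat{\mathbf{V}}_T^{(i,j)}\to\mathbf{V}(j-i)$ almost surely, where the expectations use $\mathbb{E}\mathbf{z}=\mathbf{0}$ and the definition of $\mathbf{V}(l)$.

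Now for the uniform bound, for any $\mathbf{a}$ with $\Vert\mathbf{a}\Vert=1$ we have
\[
|\mathbf{a}^{\prime}\widehat{\boldsymbol{\mu}}^{(j)}_T|\leq \Vert\widehat{\boldsymbol{\mu}}^{(j)}_T\Vert,\qquad |\mathbf{a}^{\prime}(\widehat{\mathbf{V}}_T^{(i,j)}-\mathbf{V}(j-i))\mathbf{a}|\leq \Vert\widehat{\mathbf{V}}_T^{(i,j)}-\mathbf{V}(j-i)\Vert_F,
\]
and the right-hand sides do not depend on $\mathbf{a}$ and tend to $0$ almost surely. Since the Frobenius norm of a matrix with $(k_2+2)^2$ entries is bounded by a constant times the maximum absolute entry, we obtain
\[
\sup_{\Vert\mathbf{a}\Vert=1}\left\Vert \frac{\mathbf{F}_{k_1,k_2}^{\prime}\mathbf{F}_{k_1,k_2}}{T-(k_1+k_2)}-\mathcal{S}(\mathbf{a})\right\Vert_F\leq C\Bigl(\sum_{j}\Vert\widehat{\boldsymbol{\mu}}^{(j)}_T\Vert+\sum_{i,j}\Vert\widehat{\mathbf{V}}_T^{(i,j)}-\mathbf{V}(j-i)\Vert_F\Bigr),
\]
which tends to $0$ almost surely by the previous step.

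There is no real obstacle; the only point requiring a little care is the bookkeeping of indices that aligns the $l$-offsets of the block columns of $\mathbf{F}_{k_1,k_2}$ with the lag structure of $\mathbf{V}(l)$ inside $\mathcal{S}(\mathbf{a})$. Once the entries are identified as quadratic forms in $\mathbf{a}$ with kernels independent of $\mathbf{a}$, the rest is a standard application of the ergodic theorem together with the equivalence of matrix norms in finite dimensions.
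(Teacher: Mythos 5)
Your proof is correct and follows essentially the same route as the paper: both identify the entries of $\mathbf{F}_{k_1,k_2}^{\prime}\mathbf{F}_{k_1,k_2}/(T-(k_1+k_2))$ as linear and quadratic forms in $\mathbf{a}$ whose kernels do not depend on $\mathbf{a}$, apply the ergodic theorem to those finitely many kernel matrices, and then control the supremum over the unit sphere by the norms of the kernel differences. The only cosmetic difference is that you phrase the kernels directly as averages of $\mathbf{x}_t\mathbf{x}_{t-l}^{\prime}$ while the paper works block-wise through the matrices $\mathbf{Z}_{i,0}^{\prime}\mathbf{Z}_{j,0}$, which amounts to the same computation.
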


\begin{proof}
[Proof of Lemma \ref{lemma:conv_unif_cov}]Fix $\mathbf{a}$ with $\Vert
\mathbf{a}\Vert=1$. Then
\begin{align*}
&  \frac{\mathbf{F}_{k_{1}, k_{2}}^{\prime}\mathbf{F}_{k_{1}, k_{2}}}{T-(k_{1}
+ k_{2})}\\
&  = \frac{1}{T-(k_{1} + k_{2})}
\begin{pmatrix}
\mathbf{1}_{T-(k_{1} + k_{2})}^{\prime}\\
\mathbf{a}^{\prime} \mathbf{Z}_{k_{1} + k_{2},0}^{\prime}\\
\mathbf{a}^{\prime} \mathbf{Z}_{k_{1} + k_{2}-1,0}^{\prime}\\
\vdots\\
\mathbf{a}^{\prime} \mathbf{Z}_{k_{1},0}^{\prime}%
\end{pmatrix}
\begin{pmatrix}
\mathbf{1}_{T-(k_{1} + k_{2})} & \mathbf{Z}_{k_{1} + k_{2},0}\mathbf{a} &
\mathbf{Z}_{k_{1} + k_{2}-1,0}\mathbf{a} & \dots & \mathbf{Z}_{k_{1}%
,0}\mathbf{a}%
\end{pmatrix}
\end{align*}
Fix $k_{1}\leq i,j \leq k_{1} + k_{2}$. Then
\begin{align*}
\mathbf{Z}_{i,0}^{\prime}\mathbf{Z}_{j,0}  &  =
\begin{pmatrix}
\mathbf{Z}_{i}^{\prime}\\
\mathbf{Z}_{i-1}^{\prime}\\
\vdots\\
\mathbf{Z}_{i-k_{1}}^{\prime}\\
\end{pmatrix}
\begin{pmatrix}
\mathbf{Z}_{j} & \mathbf{Z}_{j-1} & \dots & \mathbf{Z}_{j-k_{1}}%
\end{pmatrix}
\\
&  =
\begin{pmatrix}
\mathbf{Z}_{i}^{\prime} \mathbf{Z}_{j} & \mathbf{Z}_{i}^{\prime}
\mathbf{Z}_{j-1} & \dots & \mathbf{Z}_{i}^{\prime} \mathbf{Z}_{j-k_{1}}\\
\vdots & \vdots &  & \vdots\\
\mathbf{Z}_{i-k_{1}}^{\prime} \mathbf{Z}_{j} & \mathbf{Z}_{i-k_{1}}^{\prime}
\mathbf{Z}_{j-1} & \dots & \mathbf{Z}_{i-k_{1}}^{\prime} \mathbf{Z}_{j-k_{1}}%
\end{pmatrix}
.
\end{align*}
Note that
\begin{align*}
\mathbf{Z}_{i}^{\prime}\mathbf{Z}_{j} =\sum\limits_{r=1}^{T-(k_{1} + k_{2})}
\mathbf{z}_{i+r}\mathbf{z}_{j+r}^{\prime}.
\end{align*}
By the Ergodic Theorem
\begin{align*}
\frac{\mathbf{Z}_{i}^{\prime}\mathbf{Z}_{j}}{T-(k_{1} + k_{2})}
\overset{a.s.}{\rightarrow} \boldsymbol{\Sigma}( i-j ).
\end{align*}
Hence
\begin{align*}
\frac{\mathbf{Z}_{i,0}^{\prime}\mathbf{Z}_{j,0}}{T-(k_{1} + k_{2})}
\overset{a.s.}{\rightarrow}
\begin{pmatrix}
\boldsymbol{\Sigma}( i-j) & \boldsymbol{\Sigma}( i-j+1 ) & \dots &
\boldsymbol{\Sigma}( i-j+k_{1} )\\
\vdots & \vdots &  & \vdots\\
\boldsymbol{\Sigma}( i-j-k_{1}) & \boldsymbol{\Sigma}(i-j-k_{1}+1 ) & \dots &
\boldsymbol{\Sigma}( i-j )
\end{pmatrix}
= \mathbf{V}(i-j).
\end{align*}
On the other hand
\begin{align*}
\frac{1}{T-(k_{1} + k_{2})}\mathbf{a}^{\prime}\mathbf{Z}_{i,0}^{\prime
}\mathbf{1}_{T-(k_{1} + k_{2})}  &  = \frac{1}{T-(k_{1} + k_{2})}
\sum\limits_{h=0}^{k_{1}} \mathbf{a}_{h}^{\prime} \mathbf{Z}^{\prime}_{i-h}
\mathbf{1}_{T-(k_{1} + k_{2})}\\
&  = \frac{1}{T-(k_{1} + k_{2})} \sum\limits_{h=0}^{k_{1}} \mathbf{a}%
_{h}^{\prime}
\begin{pmatrix}
\sum\limits_{r=1}^{T-(k_{1} + k_{2})} z_{i-h+r, 1}\\
\vdots\\
\sum\limits_{r=1}^{T-(k_{1} + k_{2})} z_{i-h+r, m}%
\end{pmatrix}
\overset{a.s.}{\rightarrow} 0,
\end{align*}
by the Ergodic Theorem and since $\mathbb{E}\mathbf{z}_{t}=0$ by assumption.
We have shown that
\begin{align*}
\frac{\mathbf{F}_{k_{1}, k_{2}}^{\prime}\mathbf{F}_{k_{1}, k_{2}}}{T-(k_{1} +
k_{2})} \overset{a.s.}{\rightarrow}
\begin{pmatrix}
1 & 0 & 0 & \dots & 0\\
0 & \mathbf{a}^{\prime} \mathbf{V}(0)\mathbf{a} & \mathbf{a}^{\prime}
\mathbf{V}(1)\mathbf{a} & \dots & \mathbf{a}^{\prime} \mathbf{V}%
(k_{2})\mathbf{a}\\
\vdots & \vdots & \vdots &  & \vdots\\
0 & \mathbf{a}^{\prime} \mathbf{V}(-k_{2})\mathbf{a} & \mathbf{a}^{\prime}
\mathbf{V}(-k_{2}+1)\mathbf{a} & \dots & \mathbf{a}^{\prime} \mathbf{V}%
(0)\mathbf{a}%
\end{pmatrix}
= \mathcal{S}(\mathbf{a}).
\end{align*}

To prove that the convergence holds uniformly, it suffices to show that, for
$k_{1} \leq i,j\leq k_{1} + k_{2}$,
\begin{align*}
&  \sup_{\Vert\mathbf{a}\Vert=1} \left\Vert \frac{1}{T-(k_{1} + k_{2})}%
\sum\limits_{h=0}^{k_{1}} \mathbf{a}_{h}^{\prime} \mathbf{Z}^{\prime}_{i-h}
\mathbf{1}_{T-(k_{1} + k_{2})}\right\Vert \overset{a.s.}{\rightarrow} 0 \text{
and }\\
&  \sup_{\Vert\mathbf{a}\Vert=1} \left\vert \frac{\mathbf{a}^{\prime}
\mathbf{Z}_{i,0}^{\prime}\mathbf{Z}_{j,0}\mathbf{a} }{T-(k_{1} + k_{2})} -
\mathbf{a}^{\prime} \mathbf{V}(i-j)\mathbf{a}\right\vert
\overset{a.s.}{\rightarrow} 0
\end{align*}
The first assertion follows immediately from the Ergodic Theorem. It is easy
to show that
\begin{align*}
\frac{\mathbf{a}^{\prime} \mathbf{Z}_{i,0}^{\prime}\mathbf{Z}_{j,0}\mathbf{a}
}{T-(k_{1} + k_{2})}=\frac{\sum\limits_{h=0}^{k_{1}}\sum\limits_{r=0}^{k_{1}}
\mathbf{a}_{r}^{\prime} \mathbf{Z}^{\prime}_{i-r} \mathbf{Z}_{j-h}
\mathbf{a}_{h}}{T-(k_{1} + k_{2})}.
\end{align*}
Note that for any $\mathbf{v}, \mathbf{w}\in\mathbb{R}^{m}$
\begin{align*}
\mathbf{v}^{\prime}\mathbf{Z}_{i}^{\prime}\mathbf{Z}_{j}\mathbf{w}
=\mathbf{v}^{\prime}(\sum\limits_{r=1}^{T-(k_{1} + k_{2})} \mathbf{z}%
_{i+r}\mathbf{z}_{j+r}^{\prime})\mathbf{w}.
\end{align*}
Thus, to prove the lemma it will be enough to prove that
\begin{align*}
\sup_{\Vert\mathbf{v}\Vert\leq1, \Vert\mathbf{w}\Vert\leq1 } \left\vert
\frac{\mathbf{v}^{\prime}(\sum\limits_{r=1}^{T-(k_{1} + k_{2})} \mathbf{z}%
_{i+r}\mathbf{z}_{j+r}^{\prime})\mathbf{w}}{T-(k_{1} + k_{2})} -
\mathbf{v}^{\prime}\boldsymbol{\Sigma}( i -j) \mathbf{w}\right\vert
\overset{a.s.}{\rightarrow} 0. \label{eq:conv_unif_cov}%
\end{align*}
This follows immediately from
\begin{align*}
\sup_{\Vert\mathbf{v}\Vert\leq1 \Vert\mathbf{w}\Vert\leq1 } \left\vert
\frac{\mathbf{v}^{\prime}(\sum\limits_{r=1}^{T-(k_{1} + k_{2})} \mathbf{z}%
_{i+r}\mathbf{z}_{j+r}^{\prime})\mathbf{w}}{T-(k_{1} + k_{2})} -
\mathbf{v}^{\prime}\boldsymbol{\Sigma}( i -j) \mathbf{w}\right\vert
\leq\left\Vert \frac{\sum\limits_{r=1}^{T-(k_{1} + k_{2})} \mathbf{z}%
_{i+r}\mathbf{z}_{j+r}^{\prime}}{T-(k_{1} + k_{2})} - \boldsymbol{\Sigma}( i
-j)\right\Vert
\end{align*}
and the Ergodic Theorem.
\end{proof}

\begin{lemma}
\label{lemma:conv_min_eig}
\begin{align*}
\liminf\limits_{T}\inf_{\Vert\mathbf{a} \Vert=1} \lambda_{min}\left(
\frac{\mathbf{F}_{k_{1}, k_{2}}^{\prime}\mathbf{F}_{k_{1}, k_{2}}}{T-(k_{1} +
k_{2})}\right)  \geq\inf_{\Vert\mathbf{a}\Vert=1} \lambda_{min}(\mathcal{S}%
(\mathbf{a})),
\end{align*}
with probability one.
\end{lemma}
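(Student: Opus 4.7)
The plan is to deduce this statement directly from Lemma \ref{lemma:conv_unif_cov} by a standard eigenvalue perturbation argument. Since both matrices in question are symmetric of the same size, Weyl's inequality yields
\[
\left|\lambda_{min}\!\left(\frac{\mathbf{F}_{k_{1},k_{2}}^{\prime}\mathbf{F}_{k_{1},k_{2}}}{T-(k_{1}+k_{2})}\right)-\lambda_{min}(\mathcal{S}(\mathbf{a}))\right|\leq\left\|\frac{\mathbf{F}_{k_{1},k_{2}}^{\prime}\mathbf{F}_{k_{1},k_{2}}}{T-(k_{1}+k_{2})}-\mathcal{S}(\mathbf{a})\right\|,
\]
and, since the spectral norm is dominated by the Frobenius norm, the right-hand side is at most the Frobenius-norm distance controlled in Lemma \ref{lemma:conv_unif_cov}.

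Next I would take the supremum over $\Vert\mathbf{a}\Vert=1$ on both sides. Invoking Lemma \ref{lemma:conv_unif_cov}, the supremum of the right-hand side tends to zero almost surely, so
\[
\sup_{\Vert\mathbf{a}\Vert=1}\left|\lambda_{min}\!\left(\frac{\mathbf{F}_{k_{1},k_{2}}^{\prime}\mathbf{F}_{k_{1},k_{2}}}{T-(k_{1}+k_{2})}\right)-\lambda_{min}(\mathcal{S}(\mathbf{a}))\right|\overset{a.s.}{\rightarrow}0.
\]
Therefore, for any $\varepsilon>0$, almost surely there exists $T_{0}$ such that for all $T>T_{0}$ and all $\mathbf{a}$ with $\Vert\mathbf{a}\Vert=1$,
\[
\lambda_{min}\!\left(\frac{\mathbf{F}_{k_{1},k_{2}}^{\prime}\mathbf{F}_{k_{1},k_{2}}}{T-(k_{1}+k_{2})}\right)\geq\lambda_{min}(\mathcal{S}(\mathbf{a}))-\varepsilon\geq\inf_{\Vert\mathbf{a}\Vert=1}\lambda_{min}(\mathcal{S}(\mathbf{a}))-\varepsilon.
\]

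Taking the infimum over $\Vert\mathbf{a}\Vert=1$ on the left-hand side, then the liminf over $T$, and finally letting $\varepsilon\downarrow 0$, yields the claimed inequality on a probability-one event. The only delicate point is that the eigenvalue perturbation bound must be applied uniformly in $\mathbf{a}$, but this is automatic once one has uniform convergence of the matrices, which is precisely what Lemma \ref{lemma:conv_unif_cov} provides; no further obstacle is expected.
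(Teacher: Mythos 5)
Your proof is correct and follows essentially the same route as the paper: the authors also reduce the claim to the uniform almost-sure convergence $\sup_{\Vert\mathbf{a}\Vert=1}\vert\lambda_{min}(\mathbf{F}_{k_1,k_2}'\mathbf{F}_{k_1,k_2}/(T-(k_1+k_2)))-\lambda_{min}(\mathcal{S}(\mathbf{a}))\vert\to 0$, obtained from Lemma \ref{lemma:conv_unif_cov} together with an eigenvalue perturbation bound (they cite Theorem 3.3.16 of Horn and Johnson, which plays the role of your Weyl inequality). You merely spell out the final $\varepsilon$-argument that the paper leaves implicit.
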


\begin{proof}
[Proof of Lemma \ref{lemma:conv_min_eig}]It suffices to show that
\begin{align*}
\sup_{\Vert\mathbf{a} \Vert=1}\left\vert \lambda_{min}\left(  \frac
{\mathbf{F}_{k_{1}, k_{2}}^{\prime}\mathbf{F}_{k_{1}, k_{2}}}{T-(k_{1} +
k_{2})}\right)  - \lambda_{min}(\mathcal{S}(\mathbf{a})) \right\vert
\overset{a.s.}{\rightarrow} 0
\end{align*}
and this follows from Theorem 3.3.16 of \cite{Matrix-Analysis} and Lemma
\ref{lemma:conv_unif_cov}.
\end{proof}

To ease the notation, we will note
\[
g(M) = \left(  M \left(  \inf_{\Vert\mathbf{a} \Vert= 1} \lambda_{min}\left(
\mathcal{S}(\mathbf{a}) \right)  \right)  ^{1/2} - \left(  \mathbb{E}%
\Vert\mathbf{z}\Vert^{2}\right)  ^{1/2}\right)  ^{2}.
\]
If Condition \ref{condition:eig} holds, clearly $g(M) \rightarrow+\infty$ when
$M \rightarrow+\infty$. The following Lemma is a key result.

\begin{lemma}
\label{lemma:liminf_MSE} Assume Condition \ref{condition:eig} holds. Then if
$M>(\mathbb{E}\Vert\mathbf{z} \Vert^{2}/ \inf_{\Vert\mathbf{a} \Vert= 1}
\lambda_{min}\left(  \mathcal{S}(\mathbf{a}) \right)  )^{1/2}$, with
probability 1
\begin{align*}
\lim\inf\limits_{T}\inf_{\Vert\mathbf{a} \Vert= 1, \Vert\mathbf{D}\Vert
_{F}\geq M} \text{MSE}(\mathbf{a},\mathbf{D}) \geq g(M).
\end{align*}

\end{lemma}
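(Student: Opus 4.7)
The plan is to bound the reconstruction error from below by separating the contribution of $\mathbf{F}_{k_{1},k_{2}}\mathbf{D}$ (which grows with $\|\mathbf{D}\|_F$) from the contribution of $\mathbf{Z}_{k_{1}+k_{2}}$ (which is asymptotically controlled by $\mathbb{E}\|\mathbf{z}\|^{2}$), and then take $\liminf$ uniformly. Writing $N=T-(k_{1}+k_{2})$, the reverse triangle inequality in Frobenius norm gives
\[
\sqrt{N\,\text{MSE}(\mathbf{a},\mathbf{D})}
=\|\mathbf{Z}_{k_{1}+k_{2}}-\mathbf{F}_{k_{1},k_{2}}\mathbf{D}\|_{F}
\;\geq\;\|\mathbf{F}_{k_{1},k_{2}}\mathbf{D}\|_{F}-\|\mathbf{Z}_{k_{1}+k_{2}}\|_{F}.
\]

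The first step is to bound $\|\mathbf{F}_{k_{1},k_{2}}\mathbf{D}\|_{F}^{2}=\operatorname{tr}(\mathbf{D}'\mathbf{F}_{k_{1},k_{2}}'\mathbf{F}_{k_{1},k_{2}}\mathbf{D})$ from below by $N\,\lambda_{min}\!\left(\mathbf{F}_{k_{1},k_{2}}'\mathbf{F}_{k_{1},k_{2}}/N\right)\|\mathbf{D}\|_{F}^{2}$, using the standard fact that $\operatorname{tr}(\mathbf{D}'\mathbf{M}\mathbf{D})\geq\lambda_{min}(\mathbf{M})\|\mathbf{D}\|_{F}^{2}$ for symmetric positive semi-definite $\mathbf{M}$. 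Dividing through by $\sqrt{N}$ and restricting to $\|\mathbf{a}\|=1$, $\|\mathbf{D}\|_{F}\geq M$, I obtain
\[
\text{MSE}(\mathbf{a},\mathbf{D})^{1/2}
\;\geq\;M\,\sqrt{\lambda_{min}\!\left(\tfrac{\mathbf{F}_{k_{1},k_{2}}'\mathbf{F}_{k_{1},k_{2}}}{N}\right)}\;-\;\frac{\|\mathbf{Z}_{k_{1}+k_{2}}\|_{F}}{\sqrt{N}}.
\]

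Next I would take the infimum over $\|\mathbf{a}\|=1$ and $\|\mathbf{D}\|_{F}\geq M$ on both sides (the right-hand side lower bound is non-decreasing in $\|\mathbf{D}\|_F$, so it is achieved at $\|\mathbf{D}\|_F = M$, and the $\mathbf{a}$-dependence is concentrated in the eigenvalue term). Then I pass to the $\liminf$ as $T\to\infty$. The eigenvalue term is handled by Lemma \ref{lemma:conv_min_eig}, which yields
\[
\liminf_{T}\inf_{\|\mathbf{a}\|=1}\lambda_{min}\!\left(\tfrac{\mathbf{F}_{k_{1},k_{2}}'\mathbf{F}_{k_{1},k_{2}}}{N}\right)\;\geq\;\inf_{\|\mathbf{a}\|=1}\lambda_{min}(\mathcal{S}(\mathbf{a})),
\]
almost surely. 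The residual term is handled by the Ergodic Theorem applied to $\|\mathbf{z}_{t}\|^{2}$, giving $\|\mathbf{Z}_{k_{1}+k_{2}}\|_{F}^{2}/N\to\mathbb{E}\|\mathbf{z}\|^{2}$ a.s. Combining these yields
\[
\liminf_{T}\inf_{\|\mathbf{a}\|=1,\,\|\mathbf{D}\|_{F}\geq M}\text{MSE}(\mathbf{a},\mathbf{D})^{1/2}
\;\geq\;M\left(\inf_{\|\mathbf{a}\|=1}\lambda_{min}(\mathcal{S}(\mathbf{a}))\right)^{1/2}-\bigl(\mathbb{E}\|\mathbf{z}\|^{2}\bigr)^{1/2}.
\]
The assumption on $M$ is precisely what makes the right-hand side strictly positive, so squaring preserves the inequality and yields exactly $g(M)$, proving the lemma.

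The main subtlety, rather than a true obstacle, is ensuring that the eigenvalue convergence is uniform in $\mathbf{a}$ (so that the infimum passes inside the $\liminf$) and that the squaring step is legitimate; both are clean once the bound on $M$ is in hand, since it guarantees that the pre-squared lower bound is asymptotically bounded away from zero and positive.
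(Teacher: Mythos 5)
Your proof is correct and follows essentially the same route as the paper: the reverse triangle inequality in Frobenius norm, the lower bound $\Vert\mathbf{F}_{k_{1},k_{2}}\mathbf{D}\Vert_{F}^{2}\geq\lambda_{min}(\mathbf{F}_{k_{1},k_{2}}'\mathbf{F}_{k_{1},k_{2}})\Vert\mathbf{D}\Vert_{F}^{2}$ (which the paper derives via a vec/Kronecker identity and you obtain more directly from the trace inequality), Lemma \ref{lemma:conv_min_eig} for the uniform eigenvalue control, and the Ergodic Theorem for the data term. The only cosmetic difference is your explicit remark that positivity of the pre-squared bound (guaranteed by the hypothesis on $M$) legitimizes the squaring, which the paper leaves implicit.
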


\begin{proof}
[Proof of Lemma \ref{lemma:liminf_MSE}]Note that the triangle inequality
implies that
\begin{align*}
\text{MSE}(\mathbf{a},\mathbf{D})^{1/2}\geq\frac{\Vert\mathbf{F}_{k_{1},
k_{2}}\mathbf{D}\Vert_{F} -\Vert\mathbf{Z}_{2k}\Vert_{F} }{\sqrt{T-(k_{1} +
k_{2})}}=\left\Vert \frac{\mathbf{F}_{k_{1}, k_{2}}}{\sqrt{T-(k_{1} + k_{2})}%
}\mathbf{D}\right\Vert _{F} -\left\Vert \frac{\mathbf{Z}_{k_{1} + k_{2}}%
}{\sqrt{T-(k_{1} + k_{2})}}\right\Vert _{F}.
\end{align*}
We will bound the right hand side of the last inequality. It follows from the
Ergodic Theorem that
\begin{align*}
\left\Vert \frac{\mathbf{Z}_{k_{1} + k_{2}}}{\sqrt{T-(k_{1} + k_{2})}%
}\right\Vert _{F}^{2} = \frac{1}{T-(k_{1} + k_{2})} \sum\limits_{t=(k_{1} +
k_{2})+1}^{T}\Vert\mathbf{z}_{t}\Vert^{2} \overset{a.s.}{\rightarrow}
\mathbb{E}\Vert\mathbf{z}\Vert^{2}.
\end{align*}
On the other hand
\begin{align*}
\inf_{\Vert\mathbf{a} \Vert= 1, \Vert\mathbf{D}\Vert_{F}\geq M} \left\Vert
\frac{\mathbf{F}_{k_{1}, k_{2}}}{\sqrt{T-(k_{1} + k_{2})}}\mathbf{D}%
\right\Vert _{F}\geq M \inf_{\Vert\mathbf{a} \Vert= 1} \inf_{\Vert
\mathbf{D}\Vert_{F} = 1} \left\Vert \frac{\mathbf{F}_{k_{1}, k_{2}}}%
{\sqrt{T-(k_{1} + k_{2})}}\mathbf{D}\right\Vert _{F}.
\end{align*}
Note that
\begin{align*}
\inf_{\Vert\mathbf{D}\Vert_{F} = 1} \left\Vert \frac{\mathbf{F}_{k_{1}, k_{2}%
}}{\sqrt{T-(k_{1} + k_{2})}}\mathbf{D}\right\Vert _{F}  &  = \inf
_{\Vert\mathbf{D}\Vert_{F} = 1} \left\Vert vec\left(  \frac{\mathbf{F}_{k_{1},
k_{2}}}{\sqrt{T-(k_{1} + k_{2})}}\mathbf{D} \right)  \right\Vert \\
&  =\inf_{\Vert\mathbf{D}\Vert_{F} = 1} \left\Vert \left(  \mathbf{I}%
_{m}\otimes\frac{\mathbf{F}_{k_{1}, k_{2}}}{\sqrt{T-(k_{1} + k_{2})}}\right)
vec(\mathbf{D})\right\Vert \\
&  =\inf_{\Vert\mathbf{d}\Vert= 1} \left\Vert \left(  \mathbf{I}_{m}%
\otimes\frac{\mathbf{F}_{k_{1}, k_{2}}}{\sqrt{T-(k_{1} + k_{2})}} \right)
\mathbf{d}\right\Vert \\
&  = \lambda_{min}^{1/2} \left(  \left(  \mathbf{I}_{m}\otimes\frac
{\mathbf{F}_{k_{1}, k_{2}}^{\prime}}{\sqrt{T-(k_{1} + k_{2})}} \right)
\left(  \mathbf{I}_{m}\otimes\frac{\mathbf{F}_{k_{1}, k_{2}}}{\sqrt{T-(k_{1} +
k_{2})}} \right)  \right) \\
&  = \lambda_{min}^{1/2}\left(  \mathbf{I}_{m}\otimes\frac{\mathbf{F}_{k_{1},
k_{2}}^{\prime}\mathbf{F}_{k_{1}, k_{2}}}{T-(k_{1} + k_{2})} \right)  =
\lambda_{min}^{1/2}\left(  \frac{\mathbf{F}_{k_{1}, k_{2}}^{\prime}%
\mathbf{F}_{k_{1}, k_{2}}}{T-(k_{1} + k_{2})} \right)  .
\end{align*}
Hence, by Lemma \ref{lemma:conv_min_eig}
\begin{align*}
\liminf_{T}\inf_{\Vert\mathbf{a} \Vert= 1, \Vert\mathbf{D}\Vert_{F}\geq M}
\left\Vert \frac{\mathbf{F}_{k_{1}, k_{2}}}{\sqrt{T-(k_{1} + k_{2})}%
}\mathbf{D}\right\Vert _{F} \geq M \left(  \inf_{\Vert\mathbf{a} \Vert= 1}
\lambda_{min}\left(  \mathcal{S}(\mathbf{a}) \right)  \right)  ^{1/2}.
\end{align*}
It follows that if $M>(\mathbb{E}\Vert\mathbf{z} \Vert^{2}/ \inf
_{\Vert\mathbf{a} \Vert= 1} \lambda_{min}\left(  \mathcal{S}(\mathbf{a})
\right)  )^{1/2}$
\begin{align*}
\lim\inf\limits_{T}\inf_{\Vert\mathbf{a} \Vert= 1, \Vert\mathbf{D}\Vert
_{F}\geq M} \text{MSE}(\mathbf{a},\mathbf{D}) \geq g(M).
\end{align*}

\end{proof}

Let $\mathbb{P}_{T}$ be the empirical probability measure that places mass
$1/(T-(k_{1} + k_{2}))$ at $\mathbf{y}_{1}=(\mathbf{z}_{1},\dots
,\mathbf{z}_{(k_{1}+k_{2})+1}),\dots, \mathbf{y}_{T-(k_{1} + k_{2}%
)}=(\mathbf{z}_{T-(k_{1} + k_{2})},\dots,\mathbf{z}_{T})$. The process
$(\mathbf{y}_{t})_{t}$ is strictly stationary and ergodic. Let $L_{\mathbf{a}%
,\mathbf{D}}(\mathbf{y}_{t})= \Vert\mathbf{z}_{t+(k_{1}+k_{2})}
-\widehat{\mathbf{z}}_{t+(k_{1}+k_{2})}\Vert^{2}$. It follows that
\begin{align*}
\text{MSE}(\mathbf{a},\mathbf{D})=\frac{1}{T-(k_{1} + k_{2})}\sum
\limits_{t=(k_{1}+k_{2})+1}^{T}\Vert\mathbf{z}_{t} -\widehat{\mathbf{z}}%
_{t}\Vert^{2} = \mathbb{P}_{T}L_{\mathbf{a},\mathbf{D}}.
\end{align*}

\begin{lemma}
\label{lemma:conv_unif} For each $M>0$
\begin{align*}
\sup_{\Vert\mathbf{a} \Vert= 1, \Vert\mathbf{D}\Vert_{F}\leq M} \vert
\mathbb{P}_{T}L_{\mathbf{a},\mathbf{D}} - \mathbb{P}L_{\mathbf{a},\mathbf{D}}
\vert\overset{a.s.}{\rightarrow} 0.
\end{align*}

\end{lemma}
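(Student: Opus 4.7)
The plan is to exploit the polynomial structure of $L_{\mathbf{a},\mathbf{D}}$ as a function of the parameters. Since
\[
\widehat{z}_{t,j} = \alpha_j + \sum_{h=0}^{k_2}\sum_{l=0}^{k_1}\sum_{i=1}^m b_{h,j}\, a_{l,i}\, z_{t-h-l,i}
\]
is affine in $\boldsymbol{\alpha}$ and bilinear in $(\mathbf{a},\mathbf{B})$, squaring and summing over $j$ gives a finite expansion
\[
L_{\mathbf{a},\mathbf{D}}(\mathbf{y}_t) = \sum_{k=1}^N p_k(\mathbf{a},\mathbf{D})\, q_k(\mathbf{y}_t),
\]
where each $p_k$ is a monomial of degree at most $4$ in the entries of $(\mathbf{a},\mathbf{D})$, each $q_k$ is a monomial of degree at most $2$ in the entries of $\mathbf{y}_t$, and $N$ depends only on $m,k_1,k_2$. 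This is the step I would verify first.

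Next, since $\mathbb{E}\|\mathbf{z}\|^2<\infty$, each $q_k$ satisfies $\mathbb{E}|q_k(\mathbf{y}_t)|<\infty$ (use Cauchy--Schwarz on products $z_{s,i}z_{s',i'}$ coming from two different coordinates of $\mathbf{y}_t$). By Birkhoff's Ergodic Theorem applied coordinate-wise, $\mathbb{P}_T q_k \overset{a.s.}{\to} \mathbb{P} q_k$ for each of the finitely many indices $k$, so the convergence holds simultaneously on a single event of probability one.

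On that event, the triangle inequality gives
\[
\bigl|\mathbb{P}_T L_{\mathbf{a},\mathbf{D}} - \mathbb{P} L_{\mathbf{a},\mathbf{D}}\bigr| \le \sum_{k=1}^N |p_k(\mathbf{a},\mathbf{D})|\cdot\bigl|\mathbb{P}_T q_k - \mathbb{P} q_k\bigr|.
\]
Each $|p_k(\mathbf{a},\mathbf{D})|$ is continuous in $(\mathbf{a},\mathbf{D})$ on the compact set $\{\|\mathbf{a}\|=1,\ \|\mathbf{D}\|_F\le M\}$, hence bounded by some constant $C_k(M)<\infty$. Taking the supremum and letting $T\to\infty$ would then yield
\[
\sup_{\|\mathbf{a}\|=1,\ \|\mathbf{D}\|_F\le M}\bigl|\mathbb{P}_T L_{\mathbf{a},\mathbf{D}} - \mathbb{P} L_{\mathbf{a},\mathbf{D}}\bigr|
\le \sum_{k=1}^N C_k(M)\,\bigl|\mathbb{P}_T q_k - \mathbb{P} q_k\bigr| \overset{a.s.}{\longrightarrow} 0.
\]

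The only real obstacle is the bookkeeping in the polynomial expansion and checking that every resulting coefficient monomial is integrable; there are no genuine uniformity subtleties, because finiteness of $N$ reduces a would-be Glivenko--Cantelli problem to finitely many applications of the ordinary Ergodic Theorem. An alternative route (stochastic equicontinuity via a Lipschitz bound in $(\mathbf{a},\mathbf{D})$ with an integrable envelope, combined with compactness) would work too, but the polynomial approach seems cleanest here.
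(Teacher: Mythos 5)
Your proof is correct, but it takes a genuinely different route from the paper's. The paper treats $\mathcal{L}=\{L_{\mathbf{a},\mathbf{D}}\}$ as a class of functions of the data and invokes a uniform ergodic theorem for VC-major classes (Proposition 1 of the cited reference), after checking that the class is VC-major (polynomials of bounded degree), has an integrable envelope because $\mathbb{E}\Vert\mathbf{z}\Vert^{2}<\infty$, and is pointwise separable via rational parameters. You instead exploit the fact that $L_{\mathbf{a},\mathbf{D}}(\mathbf{y}_{t})$ lies in a fixed finite-dimensional span of data monomials $q_{k}$ of degree at most two, with coefficients $p_{k}(\mathbf{a},\mathbf{D})$ that are polynomials of degree at most four in the parameters; uniform convergence over the compact set $\{\Vert\mathbf{a}\Vert=1,\ \Vert\mathbf{D}\Vert_{F}\leq M\}$ then reduces to finitely many applications of Birkhoff's theorem together with boundedness of the continuous $p_{k}$ on that compact set. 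Your expansion, the integrability check via Cauchy--Schwarz, and the appeal to stationarity and ergodicity of $(\mathbf{y}_{t})$ are all sound, so the argument is complete and entirely self-contained --- it avoids empirical process theory altogether. What the paper's VC-based argument buys is robustness to changes in the loss (for instance the robust, non-quadratic criteria mentioned in the conclusions), where the finite-dimensional-span structure would be lost; for the quadratic loss at hand your reduction is arguably the cleaner path.
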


\begin{proof}
[Proof of Lemma \ref{lemma:conv_unif}]Let
\begin{align*}
\mathcal{L}=\left\lbrace L_{\mathbf{a},\mathbf{D}}(\cdot): \mathbf{a}
\in\mathbb{R}^{m(k_{1}+1)}, \Vert\mathbf{a} \Vert= 1, \mathbf{D}\in
\mathbb{R}^{(k_{2}+2)\times m}, \Vert\mathbf{D}\Vert_{F}\leq M \right\rbrace .
\end{align*}
$\mathcal{L}$ is VC-major, since it is formed by polynomials of bounded
degree. It has an integrable envelope, since $\mathbb{E}\Vert\mathbf{z}%
\Vert^{2}<+\infty$. Moreover, if we take $\mathcal{L}^{0}$ to be the subset of
$\mathcal{L}$ formed by taking only $\mathbf{a}\in\mathbb{Q}^{m(k_{1}+1)}$ and
$\mathbf{D}\in\mathbb{Q}^{m\times(k_{2}+2)}$ it follows that: $\mathcal{L}%
^{0}$ is countable, and each element of $\mathcal{L}$ is the pointwise limit
of elements of $\mathcal{L}^{0}$. Then the lemma follows from Proposition 1 of
\cite{VCergod}.
\end{proof}

\begin{lemma}
\label{lemma:B_bounded} Assume Condition \ref{condition:eig} holds. Let
$(\widetilde{\mathbf{a}}, \widetilde{\mathbf{D}})$ be such that $\Vert
\widetilde{\mathbf{a}} \Vert=1$ and $\text{MSE}(\widetilde{\mathbf{a}},
\widetilde{\mathbf{D}}) \leq\text{MSE}(\widetilde{\mathbf{a}}, \mathbf{0})$
for all $T$. Fix $M_{0}$ such that
\[
M_{0}/2>(\mathbb{E}\Vert\mathbf{z} \Vert^{2}/ \inf_{\Vert\mathbf{a} \Vert= 1}
\lambda_{min}(\mathcal{S}(\mathbf{a})))^{1/2}%
\]
and $g(M_{0} / 2) > \sup_{\Vert\mathbf{a} \Vert=1} \text{MSE}_{0}%
(\mathbf{a},\mathbf{0})$. Then
\begin{align*}
\mathbb{P}\left(  \limsup\limits_{T}\Vert\widetilde{\mathbf{D}} \Vert_{F}<
M_{0}\right)  =1.
\end{align*}

\end{lemma}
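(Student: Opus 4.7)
The plan is to argue by contradiction: suppose there is an event $A$ of positive probability on which $\limsup_T \Vert \widetilde{\mathbf{D}} \Vert_F \geq M_0$, and derive incompatible upper and lower bounds on $\text{MSE}(\widetilde{\mathbf{a}}, \widetilde{\mathbf{D}})$ along a subsequence. The two assumed inequalities on $M_0/2$ were tailor-made to feed into Lemma \ref{lemma:liminf_MSE} with threshold $M_0/2$ and then to exceed the ``no-reconstruction'' value $\sup_{\Vert \mathbf{a}\Vert = 1}\text{MSE}_0(\mathbf{a},\mathbf{0})$.

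First I would produce the upper bound. Observe that $\text{MSE}(\mathbf{a},\mathbf{0}) = (T-(k_1+k_2))^{-1}\sum_{t=(k_1+k_2)+1}^{T}\Vert \mathbf{z}_t\Vert^2$ does not depend on $\mathbf{a}$, so by the Ergodic Theorem (or equivalently, Lemma \ref{lemma:conv_unif} with any $M\ge 0$ applied to $\mathbf{D}=\mathbf{0}$),
\[
\text{MSE}(\widetilde{\mathbf{a}},\mathbf{0}) \;\overset{a.s.}{\longrightarrow}\; \mathbb{E}\Vert \mathbf{z}\Vert^2 \;=\; \sup_{\Vert \mathbf{a}\Vert=1}\text{MSE}_0(\mathbf{a},\mathbf{0}).
\]
Combined with the hypothesis $\text{MSE}(\widetilde{\mathbf{a}},\widetilde{\mathbf{D}})\le \text{MSE}(\widetilde{\mathbf{a}},\mathbf{0})$, this gives, almost surely,
\[
\limsup_T \text{MSE}(\widetilde{\mathbf{a}},\widetilde{\mathbf{D}})\;\le\;\sup_{\Vert \mathbf{a}\Vert=1}\text{MSE}_0(\mathbf{a},\mathbf{0}).
\]

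Next I would obtain a matching lower bound on the bad event $A$. The condition $M_0/2 > (\mathbb{E}\Vert \mathbf{z}\Vert^2 / \inf_{\Vert \mathbf{a}\Vert=1}\lambda_{min}(\mathcal{S}(\mathbf{a})))^{1/2}$ is exactly what is needed to apply Lemma \ref{lemma:liminf_MSE} with $M = M_0/2$, yielding, with probability one,
\[
\liminf_T\; \inf_{\Vert \mathbf{a}\Vert=1,\; \Vert \mathbf{D}\Vert_F \ge M_0/2} \text{MSE}(\mathbf{a},\mathbf{D}) \;\ge\; g(M_0/2).
\]
On the event $A$, extract a (random) subsequence $\{T_n\}$ along which $\Vert \widetilde{\mathbf{D}}_{T_n}\Vert_F \ge M_0 > M_0/2$. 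Then $\text{MSE}(\widetilde{\mathbf{a}}_{T_n},\widetilde{\mathbf{D}}_{T_n})$ is bounded below by the infimum above for every $n$, so
\[
\liminf_n \text{MSE}(\widetilde{\mathbf{a}}_{T_n},\widetilde{\mathbf{D}}_{T_n}) \;\ge\; g(M_0/2) \;>\; \sup_{\Vert \mathbf{a}\Vert=1}\text{MSE}_0(\mathbf{a},\mathbf{0}),
\]
using the second hypothesis on $M_0$. This contradicts the upper bound on the same subsequence, so $\mathbb{P}(A)=0$ and the lemma follows.

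The main obstacle is essentially bookkeeping rather than mathematical: the subsequence extracted on $A$ is random, so one must be careful that the almost-sure statements in Lemma \ref{lemma:liminf_MSE} and in the upper-bound step above hold uniformly in $T$ (as liminf/limsup statements), which makes them valid along any such random subsequence on the intersection with $A$. Once this is noted, the argument is a clean sandwich: the assumption $\limsup_T\Vert\widetilde{\mathbf{D}}\Vert_F\ge M_0$ forces $\text{MSE}(\widetilde{\mathbf{a}},\widetilde{\mathbf{D}})$ to be eventually larger than $g(M_0/2)$, while the suboptimality-with-respect-to-$\mathbf{D}=\mathbf{0}$ hypothesis forces it to be eventually no larger than $\sup_{\Vert \mathbf{a}\Vert=1}\text{MSE}_0(\mathbf{a},\mathbf{0})$, and these two bounds are incompatible by the choice of $M_0$.
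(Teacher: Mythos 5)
Your proof is correct and follows essentially the same contradiction argument as the paper: a lower bound from Lemma \ref{lemma:liminf_MSE} at threshold $M_{0}/2$ along a subsequence of the bad event, played against the upper bound coming from the hypothesis $\text{MSE}(\widetilde{\mathbf{a}},\widetilde{\mathbf{D}})\leq\text{MSE}(\widetilde{\mathbf{a}},\mathbf{0})$ and the choice of $M_{0}$. The only (harmless) difference is that you obtain the upper bound by observing that $\text{MSE}(\cdot,\mathbf{0})$ is free of $\mathbf{a}$ and applying the Ergodic Theorem directly, whereas the paper routes this step through the uniform convergence of Lemma \ref{lemma:conv_unif}.
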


\begin{proof}
[Proof of Lemma \ref{lemma:B_bounded}]Let
\begin{align*}
C=\left\lbrace \limsup\limits_{T} \sup_{\Vert\mathbf{a} \Vert= 1,
\Vert\mathbf{D}\Vert_{F} < M_{0}} \vert\mathbb{P}_{T}L_{\mathbf{a},\mathbf{D}}
- \mathbb{P}L_{\mathbf{a},\mathbf{D}} \vert= 0\right\rbrace ,\\
D=\left\lbrace \lim\inf\limits_{T}\inf_{\Vert\mathbf{a} \Vert= 1,
\Vert\mathbf{D}\Vert_{F}\geq M_{0}/2} \text{MSE}(\mathbf{a},\mathbf{D}) \geq
g(M_{0} / 2)\right\rbrace ,\\
E=\left\lbrace \limsup\limits_{T}\Vert\widetilde{\mathbf{D}} \Vert_{F}\geq
M_{0}\right\rbrace .
\end{align*}
Assume $\mathbb{P}(E)>0$. Then, by Lemmas \ref{lemma:liminf_MSE} and
\ref{lemma:conv_unif}, $\mathbb{P}(C\cap D \cap E)>0$. Assume in what follows
that we are working in the set $C\cap D \cap E$. Then for sufficiently large
$T$
\begin{align*}
\mathbb{P}_{T}L_{\widetilde{\mathbf{a}},\mathbf{0}}\geq\mathbb{P}%
_{T}L_{\widetilde{\mathbf{a}},\widetilde{\mathbf{D}}}\geq\inf_{\Vert\mathbf{a}
\Vert= 1, \Vert\mathbf{D}\Vert_{F} \geq M_{0}/2}\mathbb{P}_{T}L_{\mathbf{a}%
,\mathbf{D}}.
\end{align*}
It follows that
\begin{align*}
\limsup\limits_{T}\mathbb{P}_{T}L_{\widetilde{\mathbf{a}},\mathbf{0}}%
\geq\liminf\limits_{T}\inf_{\Vert\mathbf{a} \Vert= 1, \Vert\mathbf{D}\Vert_{F}
\geq M_{0}/2}\mathbb{P}_{T}L_{\mathbf{a},\mathbf{D}} \geq g(M_{0} /2).
\end{align*}
It follows easily from $\limsup\limits_{T} \sup_{\Vert\mathbf{a} \Vert= 1,
\Vert\mathbf{D}\Vert_{F} < M_{0}} \vert\mathbb{P}_{T}L_{\mathbf{a},\mathbf{D}}
- \mathbb{P}L_{\mathbf{a},\mathbf{D}} \vert= 0$ that
\begin{align*}
\sup_{\Vert\mathbf{a} \Vert= 1} \text{MSE}_{0}(\mathbf{a},\mathbf{0}) =
\sup_{\Vert\mathbf{a} \Vert= 1} \mathbb{P}L_{\mathbf{a},\mathbf{0}}
\geq\limsup\limits_{T}\mathbb{P}_{T}L_{\widetilde{\mathbf{a}},\mathbf{0}}.
\end{align*}
But by assumption
\begin{align*}
\sup_{\Vert\mathbf{a} \Vert= 1} \text{MSE}_{0}(\mathbf{a},\mathbf{0}) <
g(M_{0} /2).
\end{align*}
We have arrived at a contradiction. It must be that $\mathbb{P}(E)=0$.
\end{proof}

\begin{proof}
[Proof of Proposition \ref{propo:exist_minsamp}]Take $M>0$ such that
\[
g(M)> 2 \inf_{\Vert\mathbf{a} \Vert= 1} \text{MSE}(\mathbf{a},\mathbf{0})
\]
and $M>(\mathbb{E}\Vert\mathbf{z} \Vert^{2}/ \inf_{\Vert\mathbf{a} \Vert= 1}
\lambda_{min}\left(  \mathcal{S}(\mathbf{a}) \right)  )^{1/2}$. Since
$\text{MSE}(\mathbf{a},\mathbf{D})$ is continuous, it attains its minimum over
the compact set $\left\lbrace (\mathbf{a},\mathbf{D}) : \Vert\mathbf{a}
\Vert=1, \Vert\mathbf{D}\Vert_{F} \leq M\right\rbrace $. Moreover
\[
\min_{\Vert\mathbf{a} \Vert= 1, \Vert\mathbf{D}\Vert\leq M} \text{MSE}%
(\mathbf{a},\mathbf{D}) \leq\inf_{\Vert\mathbf{a} \Vert=1} \text{MSE}%
(\mathbf{a},\mathbf{0}).
\]
Let
\[
A = \left\lbrace \liminf\inf_{\Vert\mathbf{a} \Vert= 1, \Vert\mathbf{D}%
\Vert_{F}\geq M} \text{MSE}(\mathbf{a},\mathbf{D}) > g(M) \right\rbrace .
\]
By Lemma \ref{lemma:liminf_MSE}, $\mathbb{P}(A) = 1$. Assume we are working in
the event $A$ in what follows. Then, there exists $T_{0}$ such that for
$T>T_{0}$, $\text{MSE}(\mathbf{a},\mathbf{D}) > \inf_{\Vert\mathbf{a} \Vert=1}
\text{MSE}(\mathbf{a},\mathbf{0})$ for all $\mathbf{a}$, $\mathbf{D} $ with
$\Vert\mathbf{a} \Vert=1$, $\Vert\mathbf{D}\Vert_{F}\geq M$. Hence, for
$T>T_{0}$, $\min_{\Vert\mathbf{a} \Vert= 1, \Vert\mathbf{D}\Vert\leq M}
\text{MSE}(\mathbf{a},\mathbf{D}) \leq\text{MSE}(\widetilde{\mathbf{a}},
\widetilde{\mathbf{D}})$ for all $(\widetilde{\mathbf{a}},
\widetilde{\mathbf{D}})$ with $\Vert\widetilde{\mathbf{a}} \Vert=1$, from
which the results follows.
\end{proof}

\begin{proof}
[Proof of Proposition \ref{propo:exist_minpop}]%
\label{proof:propo:exist_minpop} Fix $\mathbf{a}$ with $\Vert\mathbf{a}
\Vert=1$ and $\mathbf{D}$. Note that
\begin{align*}
\text{MSE}_{0}(\mathbf{a},\mathbf{D}) = \mathbb{E}\Vert\mathbf{z}_{t} -
\widehat{\mathbf{z}}_{t}\Vert^{2} = \mathbb{E}\Vert\widehat{\mathbf{z}}%
_{t}\Vert^{2} + \mathbb{E}\Vert\mathbf{z}_{t}\Vert^{2} - 2\mathbb{E}%
\mathbf{z}_{t}^{\prime} \widehat{\mathbf{z}}_{t}.
\end{align*}
Let $\mathbf{f}_{t}^{\prime}= \mathbf{f}_{t}^{\prime}(\mathbf{a}) =
(1,\mathbf{a}^{\prime} \mathbf{x}_{t}, \mathbf{a}^{\prime} \mathbf{x}%
_{t-1},\dots, \mathbf{a}^{\prime} \mathbf{x}_{t-k_{2}})$. Note that
$\widehat{\mathbf{z}}_{t} = \mathbf{D}^{\prime}\mathbf{f}_{t}$. Hence
\begin{align*}
\mathbb{E}\Vert\widehat{\mathbf{z}}_{t}\Vert^{2} = \mathbb{E}\mathbf{f}%
_{t}^{\prime}\mathbf{D}\mathbf{D}^{\prime}\mathbf{f}_{t}= \Tr(\mathbf{D}%
\mathbf{D}^{\prime} \mathcal{S}(\mathbf{a}) ).
\end{align*}
Since $\mathbf{D}\mathbf{D}^{\prime}$ and $\mathcal{S}(\mathbf{a}) -
\mathbf{I}_{k_{2}+2} \inf_{\Vert\mathbf{a} \Vert= 1} \lambda_{min}%
(\mathcal{S}(\mathbf{a}))$ are symmetric and semi-positive definite we have
that
\begin{align*}
\Tr(\mathbf{D}\mathbf{D}^{\prime} \mathcal{S}(\mathbf{a}) )\geq\Tr(\mathbf{D}%
\mathbf{D}^{\prime} \inf_{\Vert\mathbf{a} \Vert= 1} \lambda_{min}%
(\mathcal{S}(\mathbf{a}))) = \Vert\mathbf{D} \Vert_{F}^{2}\inf_{\Vert
\mathbf{a} \Vert= 1} \lambda_{min}(\mathcal{S}(\mathbf{a})).
\end{align*}
On the other hand, by the Cauchy-Schwartz inequality
\begin{align*}
\mathbb{E}\vert\mathbf{z}_{t}^{\prime}\widehat{\mathbf{z}}_{t}\vert
\leq(\mathbb{E}\Vert\mathbf{z}_{t}\Vert^{2})^{1/2}(\mathbb{E}\Vert
\widehat{\mathbf{z}}_{t}\Vert^{2})^{1/2}.
\end{align*}
Since $\mathbf{D}\mathbf{D}^{\prime}$ and $\mathbf{I}_{k_{2}+2} \sup
_{\Vert\mathbf{a} \Vert= 1} \lambda_{max}(\mathcal{S}(\mathbf{a})) -
\mathcal{S}(\mathbf{a})$ are symmetric and semi-positive definite we have
that
\begin{align*}
\Tr(\mathbf{D}\mathbf{D}^{\prime} \mathcal{S}(\mathbf{a}) )\leq\Tr(\mathbf{D}%
\mathbf{D}^{\prime} \sup_{\Vert\mathbf{a} \Vert= 1} \lambda_{max}%
(\mathcal{S}(\mathbf{a}))) = \Vert\mathbf{D} \Vert_{F}^{2}\sup_{\Vert
\mathbf{a} \Vert= 1} \lambda_{max}(\mathcal{S}(\mathbf{a})).
\end{align*}
Hence,
\begin{align*}
(\mathbb{E}\Vert\widehat{\mathbf{z}}_{t}\Vert^{2})^{1/2}\leq\Vert\mathbf{D}
\Vert_{F} (\sup_{\Vert\mathbf{a} \Vert= 1} \lambda_{max}(\mathcal{S}%
(\mathbf{a})))^{1/2}.
\end{align*}
It follows that
\begin{align*}
\inf_{\Vert\mathbf{a} \Vert= 1} \mathbb{E}\Vert\mathbf{z}_{t} -
\widehat{\mathbf{z}}_{t}\Vert^{2} \geq\Vert\mathbf{D} \Vert_{F}^{2}\inf
_{\Vert\mathbf{a} \Vert= 1} \lambda_{min}(\mathcal{S}(\mathbf{a})) +
\mathbb{E}\Vert\mathbf{z}_{t}\Vert^{2} - 2 (\mathbb{E}\Vert\mathbf{z}_{t}%
\Vert^{2})^{1/2} \Vert\mathbf{D} \Vert_{F} (\sup_{\Vert\mathbf{a} \Vert= 1}
\lambda_{max}(\mathcal{S}(\mathbf{a})))^{1/2},
\end{align*}
from which the proposition follows immediately.
\end{proof}

\begin{proof}
[Proof of Theorem \ref{theo:consistency}]Fix $\varepsilon>0$. Let $A=\left\{
\limsup\limits_{T}d((\widehat{\mathbf{a}},\widehat{\mathbf{D}}),\mathcal{I}%
)\geq\varepsilon\right\}  $. We will show that $\mathbb{P}(A)=0$. Assume
$\mathbb{P}(A)>0$. Take $(\mathbf{a}^{\ast},\mathbf{D}^{\ast})\in\mathcal{I}$.
Fix $M_{0}$ large enough such that $M_{0}/2$ satisfies the hypothesis of Lemma
\ref{lemma:B_bounded}. Note that
\[
\inf\left\{  \text{MSE}_{0}(\mathbf{a},\mathbf{D}):\Vert\mathbf{a}%
\Vert=1,\Vert\mathbf{D}\Vert_{F}\leq M_{0},d((\mathbf{a},\mathbf{D}%
),\mathcal{I})\geq\varepsilon/2\right\}  >\text{MSE}_{0}(\mathbf{a}^{\ast
},\mathbf{D}^{\ast}).
\]
Since $\mathbb{P}_{T}L_{\widehat{\mathbf{a}},\widehat{\mathbf{D}}}%
\leq\mathbb{P}_{T}L_{\mathbf{a}^{\ast},\mathbf{D}^{\ast}}$, and by the Ergodic
Theorem
\[
\mathbb{P}_{T}L_{\mathbf{a}^{\ast},\mathbf{D}^{\ast}}%
\overset{a.s.}{\rightarrow}\text{MSE}_{0}(\mathbf{a}^{\ast},\mathbf{D}^{\ast
}),
\]
we have that, with probability one
\[
\limsup\limits_{T}\mathbb{P}_{T}L_{\widehat{\mathbf{a}},\widehat{\mathbf{D}}%
}\leq\text{MSE}_{0}(\mathbf{a}^{\ast},\mathbf{D}^{\ast}).
\]
Let
\begin{align*}
B  &  =\left\{  \limsup\limits_{T}\sup_{\Vert\mathbf{a}\Vert=1,\Vert
\mathbf{D}\Vert_{F}\leq M_{0}}|\mathbb{P}_{T}L_{\mathbf{a},\mathbf{D}%
}-\mathbb{P}L_{\mathbf{a},\mathbf{D}}|=0\right\}  ,\\
C  &  =\left\{  \limsup\limits_{T}\Vert\widehat{\mathbf{D}}\Vert_{F}%
<M_{0}/2\right\}  .\\
D  &  =\left\{  \limsup\limits_{T}\mathbb{P}_{T}L_{\widehat{\mathbf{a}%
},\widehat{\mathbf{D}}}\leq\text{MSE}_{0}(\mathbf{a}^{\ast},\mathbf{D}^{\ast
})\right\}  .
\end{align*}
Then, by Lemmas \ref{lemma:B_bounded} and \ref{lemma:conv_unif},
$\mathbb{P}(A\cap B\cap C\cap D)>0$. Assume in what follows that we are
working in the set $A\cap B\cap C\cap D$. Note that since
\[
\left\{  (\mathbf{a},\mathbf{D}):\Vert\mathbf{a}\Vert=1,\Vert\mathbf{D}%
\Vert_{F}\leq M_{0},d((\mathbf{a},\mathbf{D}),\mathcal{I})\geq\varepsilon
/2\right\}  \subseteq\left\{  (\mathbf{a},\mathbf{D}):\Vert\mathbf{a}%
\Vert=1,\Vert\mathbf{D}\Vert_{F}\leq M_{0}\right\}
\]
we have that
\[
\limsup\limits_{T}\sup\left\{  |\mathbb{P}_{T}L_{\mathbf{a},\mathbf{D}%
}-\mathbb{P}L_{\mathbf{a},\mathbf{D}}|:|\Vert\mathbf{a}\Vert=1,\Vert
\mathbf{D}\Vert_{F}\leq M_{0},d((\mathbf{a},\mathbf{D}),\mathcal{I}%
)\geq\varepsilon/2\right\}  =0,
\]
and hence that
\begin{align*}
&  \liminf\limits_{T}\inf\left\{  \mathbb{P}_{T}L_{\mathbf{a},\mathbf{D}%
}:\Vert\mathbf{a}\Vert=1,\Vert\mathbf{D}\Vert_{F}\leq M_{0},d((\mathbf{a}%
,\mathbf{D}),\mathcal{I})\geq\varepsilon/2\right\} \\
&  \geq\inf\left\{  \mathbb{P}L_{\mathbf{a},\mathbf{D}}:\Vert\mathbf{a}%
\Vert=1,\Vert\mathbf{D}\Vert_{F}\leq M_{0},d((\mathbf{a},\mathbf{D}%
),\mathcal{I})\geq\varepsilon/2\right\}  .
\end{align*}
Since $\limsup\limits_{T}d((\widehat{\mathbf{a}},\widehat{\mathbf{D}%
}),\mathcal{I})\geq\varepsilon$ and $\limsup\limits_{T}\Vert
\widehat{\mathbf{D}}\Vert_{F}<M_{0}/2$, there exists a subsequence of
$(\widehat{\mathbf{a}},\widehat{\mathbf{D}})$, which in an abuse of notation
we continue to call $(\widehat{\mathbf{a}},\widehat{\mathbf{D}})$, such that
$d((\widehat{\mathbf{a}},\widehat{\mathbf{D}}),\mathcal{I})>\varepsilon/2$ and
$\Vert\widehat{\mathbf{D}}\Vert_{F}\leq M_{0}$ for all $T$. Then
\begin{align*}
\text{MSE}_{0}(\mathbf{a}^{\ast},\mathbf{D}^{\ast})  &  \geq\liminf
\limits_{T}\mathbb{P}_{T}L_{\widehat{\mathbf{a}},\widehat{\mathbf{B}}}\\
&  \geq\liminf\limits_{T}\inf\left\{  \mathbb{P}_{T}L_{\mathbf{a},\mathbf{D}%
}:\Vert\mathbf{a}\Vert=1,\Vert\mathbf{D}\Vert_{F}\leq M_{0},d((\mathbf{a}%
,\mathbf{D}),\mathcal{I})\geq\varepsilon/2\right\} \\
&  \geq\inf\left\{  \mathbb{P}L_{\mathbf{a},\mathbf{D}}:\Vert\mathbf{a}%
\Vert=1,\Vert\mathbf{D}\Vert_{F}\leq M_{0},d((\mathbf{a},\mathbf{D}%
),\mathcal{I})\geq\varepsilon/2\right\} \\
&  =\inf\left\{  \text{MSE}_{0}(\mathbf{a},\mathbf{D}):\Vert\mathbf{a}%
\Vert=1,\Vert\mathbf{D}\Vert_{F}\leq M_{0},d((\mathbf{a},\mathbf{D}%
),\mathcal{I})\geq\varepsilon/2\right\} \\
&  >\text{MSE}_{0}(\mathbf{a}^{\ast},\mathbf{D}^{\ast}),
\end{align*}
a contradiction. It must be that $\mathbb{P}(A)=0$.
\end{proof}

\begin{proof}
[Proof of Theorem \ref{theo:dfm}]For all $\mathbf{a}\in\mathbb{R}^{m(k_{1}%
+1)}$
\begin{align*}
\widehat{\mathbf{f}}_{t}(\mathbf{a})  &  =%
\begin{pmatrix}
\mathbf{z}_{t}^{\prime} & \dots & \mathbf{z}_{t-k_{1}}^{\prime}\\
\vdots & \vdots & \vdots\\
\mathbf{z}_{t-k_{2}}^{\prime} & \dots & \mathbf{z}_{t-k_{1}-k_{2}}^{\prime}\\
&  &
\end{pmatrix}
\mathbf{a}\\
&  =%
\begin{pmatrix}
\mathbf{f}_{t}^{\prime}\mathbf{B} & \dots & \mathbf{f}_{t-k_{1}}^{\prime
}\mathbf{B}\\
\vdots & \vdots & \vdots\\
\mathbf{f}_{t-k_{2}}^{\prime}\mathbf{B} & \dots & \mathbf{f}_{t-k_{1}-k_{2}%
}^{\prime}\mathbf{B}\\
&  &
\end{pmatrix}
\mathbf{a}+%
\begin{pmatrix}
\mathbf{e}_{t}^{\prime} & \dots & \mathbf{e}_{t-k_{1}}^{\prime}\\
\vdots & \vdots & \vdots\\
\mathbf{e}_{t-k_{2}}^{\prime} & \dots & \mathbf{e}_{t-k_{1}-k_{2}}^{\prime}\\
&  &
\end{pmatrix}
\mathbf{a}\\
&  =%
\begin{pmatrix}
\mathbf{f}_{t}^{\prime} & \dots & \mathbf{f}_{t-k_{1}}^{\prime}\\
\vdots & \vdots & \vdots\\
\mathbf{f}_{t-k_{2}}^{\prime} & \dots & \mathbf{f}_{t-k_{1}-k_{2}}^{\prime}\\
&  &
\end{pmatrix}
(\mathbf{I}_{k_{1}+1}\otimes\mathbf{B})\mathbf{a}+%
\begin{pmatrix}
\mathbf{e}_{t}^{\prime} & \dots & \mathbf{e}_{t-k_{1}}^{\prime}\\
\vdots & \vdots & \vdots\\
\mathbf{e}_{t-k_{2}}^{\prime} & \dots & \mathbf{e}_{t-k_{1}-k_{2}}^{\prime}\\
&  &
\end{pmatrix}
\mathbf{a}\\
&  =\mathbf{F}_{t}(\mathbf{I}_{k_{1}+1}\otimes\mathbf{B})\mathbf{a}%
+\mathbf{E}_{t}\mathbf{a}.
\end{align*}
Let $\widetilde{\mathbf{a}}\in\mathbb{R}^{m(k_{1}+1)}$ be defined by
$\widetilde{\mathbf{a}}=((1/\Vert\mathbf{b}_{0}\Vert)\mathbf{b}_{0}^{\prime
},\mathbf{0}_{m},\dots,\mathbf{0}_{m})^{\prime}$. Since $(\mathbf{a}^{\ast
},\mathbf{B}^{\ast})\in\mathcal{I}$
\begin{align*}
\text{MSE}_{0}(\mathbf{a}^{\ast},\mathbf{B}^{\ast})  &  \leq\text{MSE}%
_{0}(\widetilde{\mathbf{a}},\mathbf{B}/\Vert\mathbf{b}_{0}\Vert)\\
&  =\mathbb{E}\left\Vert \mathbf{B}^{\prime}\mathbf{f}_{t}+\mathbf{e}%
_{t}-(\mathbf{B}^{\prime}/\Vert\mathbf{b}_{0}\Vert)\mathbf{F}_{t}%
(\mathbf{I}_{k_{1}+1}\otimes\mathbf{B})\widetilde{\mathbf{a}}-(\mathbf{B}%
^{\prime}/\Vert\mathbf{b}_{0}\Vert)\mathbf{E}_{t}\widetilde{\mathbf{a}%
}\right\Vert ^{2}\\
&  =\mathbb{E}\left\Vert \mathbf{B}^{\prime}\mathbf{f}_{t}-\mathbf{B}^{\prime
}\mathbf{F}_{t}(\mathbf{I}_{k_{1}+1}\otimes(\mathbf{B}/\Vert\mathbf{b}%
_{0}\Vert))\widetilde{\mathbf{a}}\right\Vert ^{2}+\mathbb{E}\left\Vert
\mathbf{e}_{t}-(\mathbf{B}^{\prime}/\Vert\mathbf{b}_{0}\Vert)\mathbf{E}%
_{t}\widetilde{\mathbf{a}}\right\Vert ^{2}\\
&  +2\mathbb{E}\left(  \mathbf{B}^{\prime}\mathbf{f}_{t}-\mathbf{B}^{\prime
}\mathbf{F}_{t}(\mathbf{I}_{k_{1}+1}\otimes(\mathbf{B}/\Vert\mathbf{b}%
_{0}\Vert))\widetilde{\mathbf{a}}\right)  ^{\prime}\left(  \mathbf{e}%
_{t}-(\mathbf{B}^{\prime}/\Vert\mathbf{b}_{0}\Vert)\mathbf{E}_{t}%
\widetilde{\mathbf{a}}\right)  .
\end{align*}
By Condition \ref{condition:DFM}b) and c), $\mathbb{E}\left(  \mathbf{B}%
^{\prime}\mathbf{f}_{t}-\mathbf{B}^{\prime}\mathbf{F}_{t}(\mathbf{I}_{k_{1}%
+1}\otimes(\mathbf{B}/\Vert\mathbf{b}_{0}\Vert))\widetilde{\mathbf{a}}\right)
^{\prime}\left(  \mathbf{e}_{t}-(\mathbf{B}^{\prime}/\Vert\mathbf{b}_{0}%
\Vert)\mathbf{E}_{t}\widetilde{\mathbf{a}}\right)  =0$. By Condition
\ref{condition:DFM}a)
\begin{align*}
\mathbb{E}\left\Vert \mathbf{B}^{\prime}\mathbf{f}_{t}-\mathbf{B}^{\prime
}\mathbf{F}_{t}(\mathbf{I}_{k_{1}+1}\otimes(\mathbf{B}/\Vert\mathbf{b}%
_{0}\Vert))\widetilde{\mathbf{a}}\right\Vert ^{2}  &  \leq\Vert\mathbf{B}%
\Vert^{2}\mathbb{E}\left\Vert \mathbf{f}_{t}-\mathbf{F}_{t}(\mathbf{I}%
_{k_{1}+1}\otimes(\mathbf{B}/\Vert\mathbf{b}_{0}\Vert))\widetilde{\mathbf{a}%
}\right\Vert ^{2}\\
&  =O(m)\mathbb{E}\left\Vert \mathbf{f}_{t}-\mathbf{F}_{t}(\mathbf{I}%
_{k_{1}+1}\otimes(\mathbf{B}/\Vert\mathbf{b}_{0}\Vert))\widetilde{\mathbf{a}%
}\right\Vert ^{2}.
\end{align*}
By Condition \ref{condition:DFM}a), $\Vert\mathbf{b}_{0}\Vert^{2}%
/m\rightarrow1$. Then
\begin{align*}
(\mathbf{I}_{k_{1}+1}\otimes(\mathbf{B}/\Vert\mathbf{b}_{0}\Vert
))\widetilde{\mathbf{a}}  &  =(1,(\mathbf{b}_{1}^{\prime}\mathbf{b}_{0}%
)/\Vert\mathbf{b}_{0}\Vert^{2},(\mathbf{b}_{2}^{\prime}\mathbf{b}_{0}%
)/\Vert\mathbf{b}_{0}\Vert^{2},\dots,0)\\
&  \rightarrow(1,0,\dots,0)^{\prime}\in\mathbb{R}^{(k_{1}+1)(k_{2}+1)}%
\end{align*}
which implies that $\mathbf{F}_{t}(\mathbf{I}_{k_{1}+1}\otimes(\mathbf{B}%
/\Vert\mathbf{b}_{0}\Vert))\widetilde{\mathbf{a}}\rightarrow\mathbf{f}_{t} $.

It follows from Condition \ref{condition:DFM}b) and the Bounded Convergence
Theorem that
\begin{align*}
\frac{1}{m}\mathbb{E}\left\Vert \mathbf{B}^{\prime} \mathbf{f}_{t} -
\mathbf{B}^{\prime}\mathbf{F}_{t} (\mathbf{I}_{k_{1} + 1} \otimes
(\mathbf{B}/\Vert\mathbf{b}_{0}\Vert))\widetilde{\mathbf{a}} \right\Vert ^{2}
\rightarrow0.
\end{align*}
Note that
\begin{align*}
\mathbb{E} \left\Vert \mathbf{e}_{t} - (\mathbf{B}^{\prime}/\Vert
\mathbf{b}_{0}\Vert)\mathbf{E}_{t}\widetilde{\mathbf{a}} \right\Vert ^{2} =
\mathbb{E}\left\Vert \mathbf{e}_{t}\right\Vert ^{2} + \mathbb{E}\left\Vert
(\mathbf{B}^{\prime}/\Vert\mathbf{b}_{0}\Vert)\mathbf{E}_{t}%
\widetilde{\mathbf{a}} \right\Vert ^{2} - 2 \mathbb{E} \mathbf{e}_{t}^{\prime
}(\mathbf{B}^{\prime}/\Vert\mathbf{b}_{0}\Vert)\mathbf{E}_{t}%
\widetilde{\mathbf{a}}.
\end{align*}
Now $\mathbf{E}_{t}\widetilde{\mathbf{a}} = ((\mathbf{e}_{t}^{\prime}
\mathbf{b}_{0})/\Vert\mathbf{b}_{0}\Vert,\dots, (\mathbf{e}_{t-k_{2}}^{\prime}
\mathbf{b}_{0})/\Vert\mathbf{b}_{0}\Vert)^{\prime}$ and hence
\begin{align*}
\Vert\mathbf{E}_{t}\widetilde{\mathbf{a}} \Vert^{2}  &  = \sum\limits_{h=0}%
^{k_{2}} (\mathbf{e}_{t-h}^{\prime} \mathbf{b}_{0})^{2}/\Vert\mathbf{b}%
_{0}\Vert^{2}.
\end{align*}
It follows that $\mathbb{E}\Vert\mathbf{E}_{t}\widetilde{\mathbf{a}} \Vert^{2}
= (k_{2} + 1) \left(  (1/\Vert\mathbf{b}_{0}\Vert^{2}) \mathbf{b}_{0}^{\prime}
\boldsymbol{\Sigma}^{e}(0)\mathbf{b}_{0}\right)  $. Hence, using Condition
\ref{condition:DFM}a) and c)
\begin{align*}
&  \frac{1}{m}\mathbb{E}\Vert(\mathbf{B}^{\prime}/\Vert\mathbf{b}_{0}\Vert)
\mathbf{E}_{t}\widetilde{\mathbf{a}} \Vert^{2} \leq\frac{\Vert\mathbf{B}%
^{\prime}\Vert^{2}}{m}(k_{2} + 1) \left(  (1/\Vert\mathbf{b}_{0}\Vert^{4})
\mathbf{b}_{0}^{\prime} \boldsymbol{\Sigma}^{e}(0)\mathbf{b}_{0}\right)  =
o(1).
\end{align*}
Finally,
\begin{align*}
\frac{1}{m\Vert\mathbf{b}_{0}\Vert}\mathbb{E} \vert\mathbf{e}_{t}^{\prime
}\mathbf{B}^{\prime}\mathbf{E}_{t}\widetilde{\mathbf{a}} \vert\leq\left(
\mathbb{E} \Vert\mathbf{e}_{t}\Vert^{2}/m\right)  ^{1/2} \left(
\mathbb{E}\Vert(\mathbf{B}^{\prime}/\Vert\mathbf{b}_{0}\Vert)\mathbf{E}%
_{t}\widetilde{\mathbf{a}} \Vert^{2}/m\right)  ^{1/2}=o(1).
\end{align*}
Since $\mathbb{E}\Vert\mathbf{e}_{t}\Vert^{2}=\sum_{j=1}^{m}\mathbb{E}
e_{t,j}^{2}$, the Theorem is proven.
\end{proof}

We will need the following general results on linear predictors.

\begin{lemma}
\label{lemma:bound_cov} Let $X$ and $Z_{1}, \dots, Z_{n}$ be zero mean random
variables satisfying $\mathbb{E}Z_{i}^{2}\leq b$, for some $b>0$ and
$i=1,\dots,n$, and $\mathbb{E}Z_{i}Z_{j}=0$ for $i\neq j$. Let $P(X \vert
Z_{i})$ be the best linear predictor of $X$ based on $Z_{i}$. Then
\[
\sum\limits_{i=1}^{n} \mathbb{E}P(Z_{i}\vert X)^{2} \leq b.
\]

\end{lemma}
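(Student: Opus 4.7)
The plan is to recognize this as a disguised Bessel inequality for the regression of $X$ onto $\mathrm{span}(Z_1,\dots,Z_n)$, exploiting the duality between the two one-variable projections. Assume without loss of generality that $\mathbb{E}X^2>0$, since otherwise $X\equiv 0$ a.s., every $P(Z_i\mid X)$ vanishes, and the inequality is trivial.

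First I would compute the two relevant best linear predictors in closed form. By definition, $P(Z_i\mid X)=(\mathbb{E}XZ_i/\mathbb{E}X^2)\,X$, hence
\[
\mathbb{E}P(Z_i\mid X)^2=\frac{(\mathbb{E}XZ_i)^2}{\mathbb{E}X^2}.
\]
Symmetrically, the best univariate predictor of $X$ based on $Z_i$ is $(\mathbb{E}XZ_i/\mathbb{E}Z_i^2)\,Z_i$. The key observation is that since the $Z_i$ are pairwise orthogonal and zero mean, the multivariate best linear predictor of $X$ based on $(Z_1,\dots,Z_n)$ decouples into the sum of the univariate predictors:
\[
\widehat X=\sum_{i=1}^n\frac{\mathbb{E}XZ_i}{\mathbb{E}Z_i^2}\,Z_i.
\]

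Next I would compute the mean squared error $\mathbb{E}(X-\widehat X)^2$; expanding using orthogonality gives $\mathbb{E}X^2-\sum_{i=1}^n(\mathbb{E}XZ_i)^2/\mathbb{E}Z_i^2$. Non-negativity of this quantity yields the Bessel-type inequality
\[
\sum_{i=1}^n\frac{(\mathbb{E}XZ_i)^2}{\mathbb{E}Z_i^2}\leq\mathbb{E}X^2.
\]
Finally, the uniform bound $\mathbb{E}Z_i^2\leq b$ implies $1/\mathbb{E}Z_i^2\geq 1/b$, so the left-hand side dominates $b^{-1}\sum_i(\mathbb{E}XZ_i)^2$. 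Rearranging gives $\sum_i(\mathbb{E}XZ_i)^2/\mathbb{E}X^2\leq b$, which is exactly the desired conclusion.

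There is no real obstacle in this lemma; it is a routine one-paragraph computation. The only conceptual point worth flagging is that the hypothesis $\mathbb{E}Z_iZ_j=0$ is used precisely once, in the step where the joint projection of $X$ onto $\mathrm{span}(Z_1,\dots,Z_n)$ separates into a sum of univariate projections, so that the Bessel inequality has the clean diagonal form above.
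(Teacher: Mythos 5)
Your proof is correct and follows essentially the same route as the paper's: both exploit the orthogonality of the $Z_i$ to decouple the joint projection of $X$ onto $\mathrm{span}(Z_1,\dots,Z_n)$ into a sum of univariate projections, obtain the Bessel-type bound $\sum_i(\mathbb{E}XZ_i)^2/\mathbb{E}Z_i^2\leq\mathbb{E}X^2$ (the paper writes this as $\mathbb{E}X^2\geq\mathbb{E}P(X\vert Z)^2$, you as non-negativity of the residual variance — the same fact), and then use $\mathbb{E}Z_i^2\leq b$ to convert this into $\sum_i(\mathbb{E}XZ_i)^2/\mathbb{E}X^2\leq b$, which is the stated conclusion. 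Your explicit handling of the degenerate case $\mathbb{E}X^2=0$ is a small tidiness bonus over the paper's version, but otherwise the arguments coincide.
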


\begin{proof}
[Proof of Lemma \ref{lemma:bound_cov}]Let $P(X \vert Z)$ be the best linear
predictor of $X$ based on $Z_{1}, \dots Z_{n}$. Since $Z_{1}, \dots, Z_{n}$
are uncorrelated, we have
\begin{align*}
P(X \vert Z) = \sum\limits_{i=1}^{n} P(X \vert Z_{i})= \sum\limits_{i=1}^{n}
\frac{\mathbb{E}X Z_{i}}{\mathbb{E}Z_{i}^{2}} Z_{i}.
\end{align*}
Hence
\begin{align*}
\mathbb{E}X^{2} \geq\mathbb{E}P(X \vert Z)^{2} = \sum\limits_{i=1}^{n}
\mathbb{E}P(X \vert Z_{i})^{2}=\sum\limits_{i=1}^{n} \frac{(\mathbb{E}X
Z_{i})^{2}}{\mathbb{E}Z_{i}^{2}}\geq(1/b)\sum\limits_{i=1}^{n} (\mathbb{E}X
Z_{i})^{2}%
\end{align*}
Now,
\begin{align*}
\sum\limits_{i=1}^{n} \mathbb{E}P(Z_{i}\vert X)^{2} =\sum\limits_{i=1}^{n}
\mathbb{E}(\frac{\mathbb{E}X Z_{i}}{\mathbb{E}X^{2}}X)^{2} = \sum
\limits_{i=1}^{n} \frac{(\mathbb{E}X Z_{i})^{2}}{\mathbb{E}X^{2}} \leq b.
\end{align*}

\end{proof}

\begin{lemma}
\label{lemma:pred_ideo} Let $Z_{1}, \dots, Z_{n}$ be zero mean random
variables satisfying $\mathbb{E}Z_{i}^{2}\leq b$, for some $b>0$ and
$i=1,\dots,n$, and $\mathbb{E}Z_{i}Z_{j}=0$ for $i\neq j$. Let $Y_{1}, \dots,
Y_{k}$ be zero mean random variables and $P(Z_{i} \vert Y)$ be the best linear
predictor of $Z_{i}$ based on $Y_{1}, \dots, Y_{k}$. Then
\begin{align*}
\liminf_{n\rightarrow\infty} \frac{1}{n}\sum\limits_{i=1}^{n} \left(
\mathbb{E}\left(  Z_{i} - P(Z_{i} \vert Y) \right)  ^{2} -\mathbb{E}Z_{i}^{2}
\right)  \geq0.
\end{align*}

\end{lemma}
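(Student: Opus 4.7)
The plan is to reduce the statement to a uniform-in-$n$ bound on $\sum_{i=1}^n \mathbb{E} P(Z_i\vert Y)^2$. Since $P(Z_i\vert Y)$ is the orthogonal projection of $Z_i$ onto the linear span of $Y_1,\dots,Y_k$, the residual $Z_i - P(Z_i\vert Y)$ is uncorrelated with $P(Z_i\vert Y)$, giving $\mathbb{E}(Z_i - P(Z_i\vert Y))^2 = \mathbb{E} Z_i^2 - \mathbb{E} P(Z_i\vert Y)^2$. Hence the desired inequality is equivalent to $\limsup_{n\to\infty}\tfrac{1}{n}\sum_{i=1}^n \mathbb{E} P(Z_i\vert Y)^2 \leq 0$, and because each summand is nonnegative it is enough to show $\sum_{i=1}^n \mathbb{E} P(Z_i\vert Y)^2 = O(1)$ as $n\to\infty$.

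The key step is to orthonormalize the predictors. I would take $\tilde Y_1,\dots,\tilde Y_\ell$, with $\ell \leq k$, to be an orthonormal basis (in the $L^2$ inner product $\langle U,V\rangle = \mathbb{E} UV$) of $\mathrm{span}\{Y_1,\dots,Y_k\}$, obtained by Gram--Schmidt. Since $P(Z_i\vert Y)$ is the $L^2$-projection of $Z_i$ onto that span, one has $\mathbb{E} P(Z_i\vert Y)^2 = \sum_{j=1}^\ell (\mathbb{E} Z_i \tilde Y_j)^2$.

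Now I would invoke Lemma \ref{lemma:bound_cov} separately for each coordinate $\tilde Y_j$. Taking $X = \tilde Y_j$ in that lemma and using that the $Z_i$ are uncorrelated with $\mathbb{E} Z_i^2 \leq b$, the conclusion $\sum_{i=1}^n \mathbb{E} P(Z_i\vert X)^2 \leq b$ reads, after substituting $\mathbb{E} P(Z_i\vert \tilde Y_j)^2 = (\mathbb{E} Z_i \tilde Y_j)^2/\mathbb{E}\tilde Y_j^2 = (\mathbb{E} Z_i \tilde Y_j)^2$, as $\sum_{i=1}^n (\mathbb{E} Z_i \tilde Y_j)^2 \leq b$. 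Summing over $j=1,\dots,\ell$ yields $\sum_{i=1}^n \mathbb{E} P(Z_i\vert Y)^2 \leq \ell b \leq k b$, and dividing by $n$ gives the result.

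I do not foresee a serious obstacle: the bound is essentially a dual application of Lemma \ref{lemma:bound_cov} after reducing to an orthonormal family. The only minor care needed is if the $Y_j$ are linearly dependent in $L^2$, in which case $\ell < k$, but the Gram--Schmidt step handles this automatically and the final bound $kb$ remains valid.
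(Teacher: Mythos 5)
Your proof is correct and follows essentially the same route as the paper: both arguments orthonormalize $Y_1,\dots,Y_k$ into an uncorrelated, unit-variance family spanning the same space, decompose $\mathbb{E}P(Z_i\vert Y)^2$ as a sum of squared correlations with the basis elements, and apply Lemma \ref{lemma:bound_cov} once per basis element to obtain the uniform bound $kb$ on $\sum_{i=1}^n \mathbb{E}P(Z_i\vert Y)^2$. The only (harmless) difference is that you explicitly handle possible linear dependence among the $Y_j$ by allowing a basis of size $\ell\leq k$, which the paper glosses over.
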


\begin{proof}
[Proof of Lemma \ref{lemma:pred_ideo}]Let $X_{1}, \dots, X_{k}$ be zero mean,
unit variance, uncorrelated random variables with the same linear span as
$Y_{1}, \dots, Y_{k}$. Let $P(Z_{i} \vert X)$ be the best linear predictor of
$Z_{i}$ based on $X_{1}, \dots, X_{k}$. Then, using Lemma
\ref{lemma:bound_cov}
\begin{align*}
\sum\limits_{i=1}^{n}\mathbb{E}\left(  Z_{i} - P(Z_{i} \vert Y) \right)  ^{2}
&  = \sum\limits_{i=1}^{n}\mathbb{E}\left(  Z_{i} - P(Z_{i} \vert X) \right)
^{2} = \sum\limits_{i=1}^{n} \left(  \mathbb{E}Z_{i}^{2} - \mathbb{E}P(Z_{i}
\vert X)^{2}\right) \\
&  = \sum\limits_{i=1}^{n} \mathbb{E}Z_{i}^{2} - \sum\limits_{i=1}^{n}%
\sum\limits_{j=1}^{k}\mathbb{E}P(Z_{i} \vert X_{j})^{2} \geq\sum
\limits_{i=1}^{n} \mathbb{E}Z_{i}^{2} - k b.
\end{align*}
Hence
\begin{align*}
\liminf_{n\rightarrow\infty} \frac{1}{n}\sum\limits_{i=1}^{n} \left(
\mathbb{E}\left(  Z_{i} - P(Z_{i} \vert Y) \right)  ^{2} - \mathbb{E}Z_{i}^{2}
\right)  \geq\lim_{n\rightarrow\infty} -\frac{kb}{n} = 0.
\end{align*}

\end{proof}

\begin{proof}
[Proof of Theorem \ref{theo:common}]%
\begin{align*}
\frac{1}{m}\text{MSE}_{0}(\mathbf{a}^{\ast},\mathbf{B}^{\ast})  &  =\frac
{1}{m}\mathbb{E}\Vert\mathbf{z}_{t}-\widehat{\mathbf{z}}_{t}\Vert^{2}=\frac
{1}{m}\mathbb{E}\Vert\mathbf{B}^{\prime}\mathbf{f}_{t}-\widehat{\mathbf{z}%
}_{t}+\mathbf{e}_{t}\Vert^{2}\\
&  =\frac{1}{m}\mathbb{E}\Vert\mathbf{B}^{\prime}\mathbf{f}_{t}%
-\widehat{\mathbf{z}}_{t}\Vert^{2}+\frac{1}{m}\mathbb{E}\Vert\mathbf{e}%
_{t}\Vert^{2}+2\mathbb{E}\mathbf{e}_{t}^{\prime}(\mathbf{B}^{\prime}%
\mathbf{f}_{t}-\widehat{\mathbf{z}}_{t}).
\end{align*}
By Condition \ref{condition:DFM}c)
\[
\mathbb{E}\mathbf{e}_{t}^{\prime}(\mathbf{B}^{\prime}\mathbf{f}_{t}%
-\widehat{\mathbf{z}}_{t})=\mathbb{E}\mathbf{e}_{t}^{\prime}%
\widehat{\mathbf{z}}_{t}.
\]
By Theorem \ref{theo:dfm}, it suffices to show that
\[
\frac{1}{m}\mathbb{E}\mathbf{e}_{t}^{\prime}\widehat{\mathbf{z}}%
_{t}\rightarrow0.
\]
Let $\widehat{e}_{t,j}$ be the best linear predictor of $e_{t,j}$ based on
$\widehat{z}_{t,j}$. Then
\[
\mathbb{E}\left(  e_{t,j}-\widehat{e}_{t,j}\right)  ^{2}=\mathbb{E}e_{t,j}%
^{2}-\frac{(\mathbb{E}\widehat{z}_{t,j}e_{t,j})^{2}}{\mathbb{E}\widehat{z}%
_{t,j}^{2}}.
\]
Note that $\mathbb{E}\widehat{z}_{t,j}^{2}\leq\mathbb{E}z_{t,j}^{2}\leq L$,
since $\widehat{z}_{t,j}$ is obtained by projecting $z_{t,j}$ on the space
spanned by $\widehat{\mathbf{f}}_{t}$. Then
\[
\frac{1}{m}\sum\limits_{j=1}^{m}\mathbb{E}\left(  e_{t,j}-\widehat{e}%
_{t,j}\right)  ^{2}\leq\frac{1}{m}\sum\limits_{j=1}^{m}\left(  \mathbb{E}%
e_{t,j}^{2}-\frac{(\mathbb{E}\widehat{z}_{t,j}e_{t,j})^{2}}{L}\right)  .
\]
Since by Lemma \ref{lemma:pred_ideo}, $\liminf(1/m)\sum_{j=1}^{m}\left(
\mathbb{E}\left(  e_{t,j}-\widehat{e}_{t,j}\right)  ^{2}-\mathbb{E}e_{t,j}%
^{2}\right)  \geq0$ it must be that $$(1/m)\sum_{j=1}^{m}(\mathbb{E}%
\widehat{z}_{t,j}e_{t,j})^{2}\rightarrow0,$$ from which the result follows by
applying the Cauchy-Schwartz inequality.
\end{proof}

\bibliographystyle{apalike}
\bibliography{fore}

\end{document}